\newtheorem{thm}{Theorem}
\newtheorem{lem}{Lemma}[section]
\newcommand{\red}{\textcolor{red}}
\newcommand{\eg}{{\it e.g., }}
\newcommand{\ie}{{\it i.e., }}
\newcommand{\E}{{\mathbb E }}
\newcommand*{\Scale}[2][4]{\scalebox{#1}{$#2$}}%
\begin{document}
%
\title{Optimal Control of Wireless Computing Networks}
%
%
%

\author{Hao~Feng,~\IEEEmembership{Student Member,~IEEE,}
        Jaime~Llorca,~\IEEEmembership{Member,~IEEE,}
        Antonia~M.~Tulino,~\IEEEmembership{Fellow,~IEEE,}
        and~Andreas~F.~Molisch,~\IEEEmembership{Fellow,~IEEE}}
\maketitle

\begin{abstract}
\emph{Augmented information (AgI) services} allow users to consume information that results from the execution of a chain of service functions that process source information to create real-time augmented value. Applications include real-time analysis of remote sensing data, real-time computer vision, personalized video streaming, and augmented reality, among others.
We consider the problem of optimal distribution of AgI services over a wireless computing network, in which nodes are equipped with both communication and computing resources. We characterize the wireless computing network capacity region and design a joint flow scheduling and resource allocation algorithm that stabilizes the underlying queuing system while achieving a network cost arbitrarily close to the minimum, with a tradeoff in network delay.
Our solution captures the unique chaining and flow scaling aspects of AgI services, while exploiting the use of the \emph{broadcast approach} coding scheme over the wireless channel.
\end{abstract}

\begin{IEEEkeywords}
Wireless computing network, service distribution, service chaining, broadcast approach, dynamic control, throughput optimality
\end{IEEEkeywords}

%
\IEEEpeerreviewmaketitle

\section{Introduction}

{\let\thefootnote\relax\footnote{Part of this work was presented in \cite{dwcnc2017}. 
This work was supported by NSF grant \#$1619129$ and CCF grant \#$1423140$.}}

Internet traffic will soon be dominated by the consumption of what we refer to as \emph{augmented information (AgI) services}.
Unlike traditional information services, in which users consume information that is produced or stored at a given source and is delivered via a communications network, AgI services provide end users with information that results from the real-time {\em processing} of source information via possibly multiple
service functions that can be hosted anywhere in the network. 
Examples 
include real-time analysis of remote sensing data, real-time computer vision, personalized video streaming, and augmented reality, among others.

While today's AgI services are mostly implemented 
in the form of software functions instantiated over general purpose servers at centralized cloud data centers \cite{fxbook}, the increasingly low latency requirements of next generation real-time AgI services is driving cloud resources closer to the end users in the form of small cloud nodes at the edge of the network, resulting in what is referred to as a \emph{distributed cloud network}.
This naturally raises the question of 
where to execute each service function, a question that is impacted both by the computation and the communication resources of the cloud network infrastructure.
The problem of placing an unordered set of service functions in a distributed cloud network was addressed in \cite{vnf}. The authors formulate the problem as a generalization of facility location and generalized assignment, and provide algorithms with bi-criteria approximation guarantees.
The work in \cite{csdp} introduced a network flow model that allows optimizing the distribution (function placement and flow routing) of services with arbitrary function relationships (\eg service chaining) over capacitated cloud networks. Cloud services are described via a directed acyclic graph and the function placement and flow routing is determined by solving a minimum cost network flow problem 
on a cloud-augmented graph.
Ref. \cite{infocom17} provides fast approximation algorithms for the service distribution problem introduced in \cite{csdp},  particularized to service function chains.


The capacity of {\em wireline} cloud networks was recently addressed by the present authors in Refs. \cite{infocom} and \cite{icc}. 
These works provided the first characterization of a cloud network capacity region, in terms of the closure of service input rates 
that can be stabilized by any control algorithm, and the design of throughput-optimal dynamic control policies that achieve an average network cost arbitrarily close to the minimum.

A key aspect not considered in all previous works is the increasingly important role of the wireless access network for efficient service delivery. 
AgI services are increasingly
sourced and accessed from wireless devices, and with the advent of mobile and fog computing \cite{fog}, service functions can also be hosted at wireless computing nodes
(\ie computing devices with wireless networking capabilities) such as mobile handsets, connected vehicles, compute-enabled access points or cloudlets \cite{cloudlet}.
When introducing the wireless network into the computing infrastructure, the often unpredictable nature of the wireless channel further complicates flow scheduling, routing, and resource allocation decisions. 
In the context of traditional wireless communication networks, 
the Lyapunov drift plus penalty (LDP) control methodology (see \cite{Neely_book2} and references therein) has been shown to be a promising approach to tackle these intricate stochastic network optimization problems. 
Ref. \cite{DIVBAR_Neely} extends the LDP approach 
to multi-hop, multi-commodity wireless ad-hoc networks, leading to 
the Diversity Backpressure (DIVBAR) algorithm. DIVBAR exploits the broadcast nature of the wireless medium without precise channel state information (CSI) at the transmitter,
and it is shown to be throughput-optimal under the assumption that at most one packet can be transmitted in each transmission attempt,  and that no advanced coding scheme is used. 
Ref. \cite{DIVBAR_Molisch} extends DIVBAR by incorporating rateless coding in the transmissions of a single packet, yielding enhanced throughput performance.


Motivated by the important role of wireless networks in the delivery of AgI services, in this paper, we address the problem of optimal distribution of AgI services over a multi-hop
{\em wireless computing network}, which is composed of nodes with communication and computing capabilities. 
We extend the {\em multi-commodity-chain} (MCC) flow model of \cite{csdp}, \cite{infocom}, \cite{icc} for the delivery of AgI services over wireless computing networks. 
We adopt the {\em broadcast approach} coding scheme \cite{Shamai_Steiner,Tulino_et_al_2014}, where information is encoded into superposition layers according to the channel conditions.
We 
characterize the capacity region of a wireless computing network and design a fully distributed flow scheduling and resource allocation algorithm that adaptively stabilizes the underlying queuing system while achieving arbitrarily close to minimum network cost, with a tradeoff in network delay.

Our contributions can be summarized as follows:
\begin{enumerate}
\item
We extend the MCC flow model of \cite{csdp}, \cite{infocom}, \cite{icc} for the delivery of AgI services over wireless computing networks, taking into account the routing diversity created by the inherent broadcast nature of the wireless channel. 
In the wireless MCC model, the queue backlog of a given commodity builds up from receiving information units of the same commodity via broadcast transmissions from neighboring nodes, as well as from the generation of information units of the same commodity via local service function processing.
\item We incorporate the use of broadcast approach coding scheme into the scheduling of AgI service flows over wireless computing networks in order to exploit routing diversity and significantly enhance transmission efficiency.
\item For a given set of AgI services, we characterize the capacity region of a wireless computing network in terms of the set of exogenous service input rates 
that can be processed through the required service functions and delivered to the required destinations. Unlike the capacity region of a traditional communications network, which only depends on the network topology, the capacity region of a wireless computing network also depends on the AgI service structure, and it is shown to be enlarged via the use of the broadcast approach.

\item We design a dynamic wireless computing network control (DWCNC) algorithm that makes local transmission, processing, and resource allocation decisions without knowledge of service demands or their statistics, and allow pushing total resource cost arbitrarily close to minimum with a tradeoff in network delay. In particular, DWCNC exhibits a $[O(1/V),O(V)]$ cost-delay tradeoff (where $V$ is a control parameter).
\end{enumerate}


The remainder of the paper is organized as follows: Section II presents the system model. Section III characterizes the network capacity region of a wireless computing network. Section IV
constructs the DWCNC algorithm, and Section V proves the optimal performance of DWCNC. The paper is concluded in Section VI.

\section{System Model}
\label{model}

\subsection{Network Model}

We consider a wireless computing network composed of $N=|\cal N|$ distributed computing nodes 
that communicate over wireless links labeled according to node pairs $(i,j)$ for $i, j \in\cal N$.
Node $i\in\cal N$ is equipped  with $K_{i}^{\text{tr}}$ transmission resource units (\eg transmission power) that it can use to transmit information over the wireless channel. 
In addition, node $i$ is equipped with $K_{i}^{\text{pr}}$ processing resource units (\eg central processing units or CPUs) that it can use to process information as part of an AgI service (see Sec. \ref{service}).

Time is slotted with slots normalized to integer units $t\in\{0,1,2,\dots\}$.
We use the binary variable $y^{\text{tr}}_{i,k}(t)\in\{0,1\}$ to indicate the allocation or activation of $k\in\{0,\dots,K_{i}^{\text{tr}}\}$ transmission resource units at node $i$ at time $t$, which incurs $w^{\text{tr}}_{i,k}$ cost units.
Analogously, $y^{\text{pr}}_{i,k}(t)\in\{0,1\}$ indicates the allocation of $k\in\{0,\dots,K_{i}^{\text{pr}}\}$ processing resource units at node $i$ at time $t$, which incurs $w^{\text{pr}}_{i,k}$ cost units. Notice that the binary resource allocation variables $y^{\text{tr}}_{i,k}(t)$, $y^{pr}_{i,k}(t)$ must satisfy
$\sum\nolimits_{k\in {\mathcal K}_i^{\text{tr}}}{y_{i,k}^{\text{tr}}(t)}\le 1$,
$\sum\nolimits_{k\in {\mathcal K}_i^{\text{pr}}}{y_{i,k}^{\text{pr}}(t)}\le 1$.



\subsection{Augmented Information Service Model}
\label{service}

\begin{figure}
\centering
\includegraphics[width=4in]{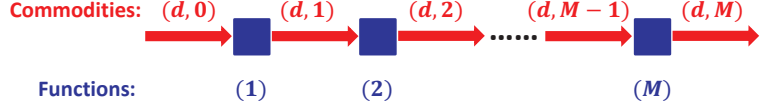}
\vspace{-0.3cm}
\caption{Illustration of an AgI service chain for destination $d\in\cal D$. There are $M$ functions and $M+1$ commodities. The AgI service takes source commodity $(d,0)$ and delivers final commodity $(d,M)$ after going through the sequence of functions $\{1,2,\dots,M\}$.  Function $m$ takes commodity $(d,m-1)$ and generates commodity $(d,m)$. }
\label{service_chain}
\vspace{-0.5cm}
\end{figure}

While the analysis in this paper readily applies to an arbitrary number of services, for ease of exposition, we focus on the distribution of single
 {\em augmented information service}, 
described by a chain of functions $\mathcal M=\{1,2,\dots,M\}$.
A service request is described by a source-destination pair $(s,d)\in\mathcal N \times\mathcal N$,
indicating the request for source flows originating at node $s$ to go through the sequence of functions $\mathcal M$ before exiting the network at destination node $d$.
We adopt a MCC flow model, in which commodity $(d,m)\in \mathcal N \times \{\mathcal M,0\}$ identifies the information units 
generated by function $m\in\mathcal M$ for destination $d\in\mathcal N$. 
We assume information units have arbitrary fine granularity (\eg packets or bits).
Commodity $(d,0)$ denotes the source commodity for destination $d$, which identifies the information units arriving exogenously
at each source node $s$ that have node $d$ as their destination. 
(see Fig. \ref{service_chain}).  



Each service function has (possibly) different processing requirements.
We denote by $r^{(m)}$ the \emph{processing complexity factor} of function $m$, which indicates the number of 
operations required by function $m$ to process one input information unit.
Another key aspect of AgI services is the fact that information flows can change size as they go through service functions.
Let $\xi^{(m)}>0$ denote the {\em scaling factor} of function $m$. Then, the size of the function's output flow is $\xi^{(m)}$ larger than its input flow.

\subsection{Computing Model}

As is shown in Fig. \ref{fig: computing_node}, we represent the processing capabilities of wireless computing nodes via a processing element (\eg CPU in a cloudlet node) co-located with each network node. 
A static, dedicated \emph{computing channel} is considered, where the achievable processing rate at node $i$ with the allocation of $k$ processing resource units is given by $R_{i,k}$ in operations per timeslot.
We use $\mu_{i,\text{pr}}^{(d,m)}(t)$ to denote the flow rate (in information units per timeslot) of commodity $(d,m)$ ($0\le m < M$) from node $i$ to its processing element at time $t$, and $\mu_{\text{pr},i}^{(d,m)}(t)$  to denote the flow rate of commodity $(d,m)$ ($0< m \le M$) from the processing element back to node $i$ (see Fig. \ref{fig: computing_node}).
We then have the following MCC and maximum \emph{processing rate} constraints:
\begin{align}
& \mu_{\text{pr},i}^{(d,m)}(t) = \xi^{(m)} \mu_{i,\text{pr}}^{(d,m\!-\!1)}(t), \quad\,\,\,\,\,\,  \forall i, d,  m\Scale[1]{>}0, t,  \label{chain} \\ 
& \sum\nolimits_{(d,m\Scale[0.6]{>0})} \mu_{i,\text{pr}}^{(d,m\Scale[0.8]{-1})}(t) \ r^{(m)} \le \sum\nolimits_{k=0}^{K_i^{\text{pr}}} R_{i,k} \ y_{i,k}^{\text{pr}}(t), \qquad   \forall i,t.  \label{ratepr}
\end{align}

Note that function $m$ at node $i$ processes input commodity $(d,m-1)$ at a rate $\mu_{i,\text{pr}}^{(d,m-1)}(t)$ information units per timeslot, using $\mu_{i,\text{pr}}^{(d,m-1)}(t) r^{(m)}$ operations per timeslot, and generates output commodity $(d,m)$ at a rate $\mu_{\text{pr},i}^{(d,m)}(t)$ information units per timeslot.

\begin{figure}
\centering
\includegraphics[width=3.5in]{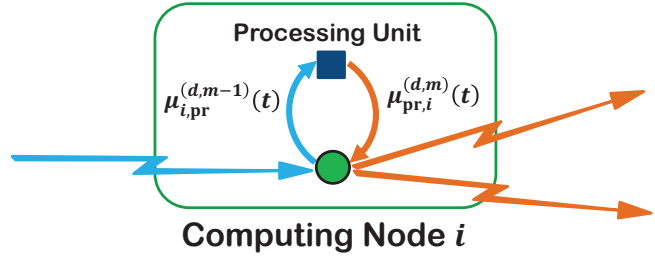}
\vspace{-0.3cm}
\caption{A computing node hosting function $m$ that processes commodity $(d,m-1)$ into commodity $(d,m)$.}
\label{fig: computing_node}
\vspace{-0.5cm}
\end{figure}



\subsection{Wireless Transmission Model}
\label{sec: wireless_transmission_model}


We assume that multiple transmitters (TXs) may transmit simultaneously to overlapping receivers (RXs)
via the use of orthogonal broadcast channels of fixed bandwidth, a priori allocated by a given policy, whose design is outside the scope of this paper.
On the other hand,
due to the broadcast nature of the wireless medium, multiple RXs may overhear the transmission of a given TX.
We model the channel between node $i$ and all other nodes in the network as a  physically degraded Gaussian broadcast channel, 
where the \emph{network state process} (the vector of all channel gains), denoted by ${\bf S}(t) \triangleq \{ s_{ij}(t), \forall i,j \in\cal N\}$, evolves according to 
a Markov process with state space ${\mathcal S}$ and whose steady-state probability exists.
We assume that the statistical CSI is known at the TX,
while the instantaneous CSI can only be learned after the transmission has taken place and is thereby outdated (delayed).

It is well-known that superposition coding is optimal (capacity achieving) for the physically degraded
broadcast channel with independent messages \cite{el2011network}.
In particular,  in this work we adopt the
broadcast approach coding scheme (see \cite{Shamai_Steiner,Tulino_et_al_2014} and references therein),
which consists of sending incremental information using superposition layers, such that
the number of decoded layers at any RX depends on its own channel state,
and the information decoded by a given RX is a subset of the information decoded by any other RX with no worse channel gain.
That is, for a given transmitting node $i$, if we sort the $N-1$ potential receiving nodes in non-decreasing order of their channel gains $\{q_{i,1},\dots,q_{i,N-1}\}$, such that $q_{i,n}$ with $n\in\{1,\dots, N-1\}$ denotes the receiver with the $n$-th lowest channel gain, 
then the information decoded by receiver $q_{i,n}$ is also decoded by receiver $q_{i,u}$, for $ u>n$.
Moreover, let $\Omega_{i,n} \triangleq \{q_{i,n},\cdots,q_{i,N-1}\}$ be the set of receivers with the $N - n$ highest channel gains.
Then, we can partition the information transmitted by node $i$ during a given timeslot into $N-1$ disjoint groups, with the $n$-th partition being the information whose successful receiver set is exactly $\Omega_{i,n}$, \ie the information that is decoded by the nodes in $\Omega_{i,n}$, but not by the nodes in ${\mathcal N} \backslash \{i\} \backslash \{\Omega_{i,n}\}$.
Fig. \ref{fig: broadcast_approach} illustrates the use of the broadcast approach for multi-receiver diversity.


\begin{figure}
\centering
\includegraphics[width=3.5in]{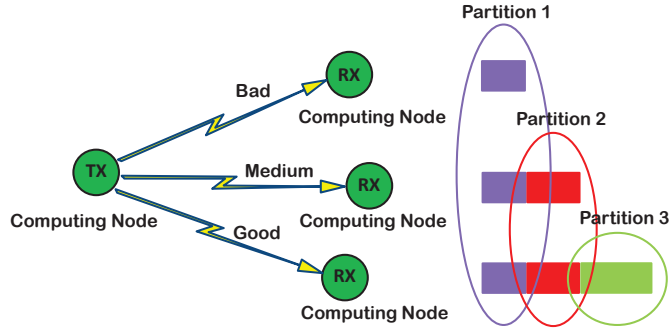}
\vspace{-0.3cm}
\caption{Illustration of the use of the broadcast approach to leverage multi-receiver diversity. The information decoded by the receiver with the ``bad'' channel is a subset of the information decoded by the receiver with the ``medium'' channel, which is further a subset of the information decoded by the receiver with the ``good'' channel. The transmitted information can therefore be grouped into three partitions.}
\label{fig: broadcast_approach}
\vspace{-0.5cm}
\end{figure}

Let  
${p}_{i,k}(a)$ denote the 
optimal power density function 
over the continuum of superposition layers resulting from the allocation
of $k$ transmission resource units at node $i$. 
Then, based on  the broadcast approach \cite{Tulino_et_al_2014}, when allocating $k$ transmission resource units, the maximum achievable rate over link $(i,j)$ at time $t$ is given by
\begin{equation} \label{Rs}
R_{ij,k}(t) = 
 \int_0^{g_{ij}(t)}  \frac{  a {p}_{i,k}(a)}{1 + a \int_a^\infty {p}_{i,k}(s) ds} da,
\end{equation}
where $g_{ij}(t)$ is the channel gain over link $(i,j)$ at time $t$.

In practice, the continuum of superposition layers are discretized into a set of $L_i, \forall i$, discrete code layers. 
In this case, the channel gain of each outgoing link from node $i$ at time $t$ can be discretized into $L_i+1$ states, denoted by $\mathcal S_{i}\triangleq\{\bar s_{i,0},\cdots,\bar s_{i,L_i}\}$, and $L_i$ channel gain thresholds $\{
\bar g_{i,l}, 1\le l \le L_i: \bar g_{i,1}\le\cdots\le \bar g_{i,L_i}\}$. Then, we have
\begin{equation}
{s_{ij}}\left( t \right) =
\begin{cases}
{\bar s_{i,0}},&\text{if ${g_{ij}}(t) < {\bar g_{i,1}};$}\\
{\bar s_{i,l}},&\text{if ${\bar g_{i,l }} \le {g_{ij}}(t) < {\bar g_{i,l+1}},\ 1 \le l < {L_i}-1;$}\\
{\bar s_{i,{L_i}}},&\text{if ${g_{ij}}(t) \ge {\bar g_{i,{L_i}}}.$}
\end{cases} \notag
\end{equation}
Let $P_{i,k}$ denote the total power associated with the allocation of $k$ transmission resource units at node $i$, and $P_{i,k}(l)$ the power allocated to code layer $l$, with $\sum_{l=1}^{L_i}P_{i,k}(l)=P_{i,k}$.\footnote{We assume that the power allocated to each layer is given and leave its optimization out of the scope the paper.}
We then use $\mathcal R_{i,k}\triangleq \{\bar R_{i,k}^0,\cdots,\bar R_{i,k}^{L_i}\}$ to denote the maximum achievable transmission rates associated with the $L_i+1$ channel states, where
\begin{equation}
\bar R_{i,k}^{l} = \sum\limits_{l' \le l} {\log \left( {1 + \frac{{{ P _{i,k}}\left( {{l'}} \right){\bar g_{i,l'}}}}{{1 + {\bar g_{i,l'}}\sum\nolimits_{l'' > l'} {{ P _{i,k}}\left( {{l''}} \right)} }}} \right)},\ {\rm for}\ 1\le l \le L_i,
\label{eq_max_rate_per_layer}
\end{equation}
and $\bar R_{i,k}^{0}=0$.

Hence, the maximum achievable transmission rate over link $(i,j)$ at time $t$ is given by 
\begin{equation}
R_{ij,k}\left( t \right) =
\begin{cases}
{\bar R_{i,k}}^0,&\text{if ${g_{ij}}(t) < {\bar g_{i,1}};$}\\
{\bar R_{i,k}}^l,&\text{if ${\bar g_{i,l }} \le {g_{ij}}(t) < {\bar g_{i,l+1}},\ 1 \le l < {L_i}-1;$}\\
{\bar R_{i,l}}^{L_i},&\text{if ${g_{ij}}(t) \ge {\bar g_{i,{L_i}}}.$}
\end{cases} \notag
\end{equation}

\subsection{Communication Protocol}
\label{sec: communication protocol}

The communication protocol between each TX-RX pair 
is illustrated in Fig. \ref{fig_timing_diagram}.
At the beginning of each timeslot, TX and RX exchange all necessary control signals, including queue backlog state information (see Sec. \ref{sec: queuing_model}). 
Then, the TX decides how many transmission resource units to allocate for the given timeslot and how much rate to allocate to each available commodity. 
Afterwards, the transmission starts and lasts for a fixed time period (within the timeslot); during that time, both data and pilot tones (whose overhead is neglected) are transmitted.

After the transmission ends, every potential RX provides immediate feedback, containing the identification of the information decoded by the RX, which allows the TX to derive the experienced CSI.
The TX then 
makes a \emph{forwarding decision} and sends it through a final instruction to all the RXs,
instructing each RX which portion of its decoded information to keep for further processing and/or forwarding (hence assigning the processing/forwarding responsibility).
Control information, feedbacks, and final instructions are sent through a stable control channel, whose overhead is neglected.

\begin{figure}
\centering
\includegraphics[width=3.5in]{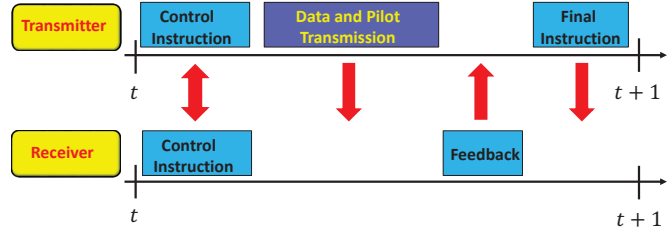}
\vspace{-0.3cm}
\caption{Timing diagram of the communication protocol over a wireless link.}
\label{fig_timing_diagram}
\vspace{-0.5cm}
\end{figure}

We use  $\mu_{ij}^{(d,m)}(t)$ to denote the amount of information of commodity $(d,m)$ retained by node $j$ after the transmission from node $i$ during timeslot $t$.
In addition, it shall be useful to denote by $\mu_{iq_{i,u},n}^{(d,m)}(t)$ the information retained by node $q_{i,u}$ belonging to the $n$-th partition of node $i$'s
transmitted information. 
Then, since $q_{i,u}\in \Omega_{i,n}$ for all $n$ satisfying $n\le u$, we have
\begin{align}
\label{eq_rate_vs_group_rate2}
&\mu _{iq_{i,u}}^{\left( {d,m} \right)} \! \left(t \right) = \sum\nolimits_{n = 1}^{u} \mu^{(d,m)}_{iq_{i,u},n}(t),\quad \forall i, u, d,m,t.
\end{align}

Moreover, according to the broadcast approach, the maximum achievable rate of the $n$-th partition, given the allocation of $k$ transmission resource units at time $t$, is $ R_{i{q_{i,n}},k} (t) - R_{i{q_{i,n-1}},k} (t)$. We then have,
\begin{align}
\label{eq_group_rate}
&\sum\nolimits_{(d,m)} \!  \mu _{i,q_{i,u},n}^{\left( {d,m} \right)} \! \left(t \right)  \le \sum\nolimits_{k=0}^{K_i^{\text{tr}}}
\left[ R_{i{q_{i,n}},k} (t) \!-\! R_{i{q_{i,n-1}},k} (t) \right] y_{i,k}^{\text{tr}}(t),  \ \ \forall i,t,u\ge n, 
\end{align}
where $R_{iq_{i,0},k} \! \left( t \right) = 0$, for all $i,k,t$.

Note that Eqs. \eqref{eq_rate_vs_group_rate2} and \eqref{eq_group_rate} lead to the following rate constraint on link $(i,j)$ for all $t$: $\sum\nolimits_{(d,m)}\mu _{ij}^{\left( {d,m} \right)} \! \left(t \right) \leq \sum\nolimits_{k=0}^{K_i^{\text{tr}}} R_{ij,k}(t) \ y_{i,k}^{\text{tr}}(t)$.



\subsection{Queuing Model}
\label{sec: queuing_model}

We denote by $a_i^{(d,m)}(t)$ the exogenous arrival rate of commodity $(d,m)$ at node $i$ at time $t$,
and by $\lambda_i^{(d,m)}$ its expected value.
We assume that $a_i^{(d,m)}(t)$ is independently and identically distributed (i.i.d.) across timeslots and its forth moment is upper bounded,\footnote{The upper bound of the fourth moment is used in the proof of convergence with probability $1$ in Theorem \ref{thm: network_capacity_region2}.}
\ie $\mathbb{E}\{(\sum_{(d,m)}{a_i^{(d,m)}(t)})^4\}\le A_{\max}^4$. 
Recall that in an AgI service only the source commodity $(d,0)$ enters the network exogenously, while all other commodities are created inside the network as the output of a service function.
Hence, $a_i^{(d,m)}(t)=0$, for all $i,t$ when $m>0$. 

During AgI service delivery, internal network queues buffer incoming data according to their commodities.
We define the \emph{queue backlog} of commodity $(d,m)$ at node $i$, $Q_i^{(d,m)}(t)$, as the amount (in information units) of commodity $(d,m)$ in the queue of node $i$ at the beginning of timeslot $t$, which evolves over time as follows:
\begin{equation}
 Q_i^{(d,m)}\!(t\!+\!1) \leq \left[ Q_i^{(d,m)}\!(t) - {\displaystyle\sum_{j:j\neq i}} \ \mu_{ij}^{\!(d,m)}\!(t) - \mu_{i,\text{pr}}^{\!(d,m)}\!(t) \right]^{\!+}+{\displaystyle\sum_{j:j\neq i}} \ \mu_{ji}^{\!(d,m)}\!(t) + \mu_{\text{pr},i}^{\!(d,m)}\!(t) + a_i^{\!(d,m)}\!(t).
\label{eq_queueing_dynamic}
\end{equation}

Note that in an AgI service only the final commodity $(d,M)$ is allowed to exit the network once it arrives to its destination $d\in\cal D$, while any other commodity $(d,m)$, $m<M$, can only get consumed by being processed into the next commodity $(d,m+1)$ of the service chain. Final commodity $(d,M)$ is assumed to leave the network immediately upon arrival/decoding at its destination, \ie $Q_d^{(d,M)}(t)=0$, for all $d,t$.

\subsection{Network Objective}
\label{sec: network objective}

The goal is to design a control algorithm that dynamically schedules, routes, and process service flows over the wireless computing network with minimum total average resource cost,
\begin{equation}
\underset{t\rightarrow\infty}{\limsup} \quad  \frac{1}{t} \, \sum\nolimits_{\tau=0}^{t-1} \, \mathbb{E}\left\{h(\tau)\right\}, \label{obj2}
\end{equation}
where $h(t)$ is the total cost of the network at time $t$,
\begin{equation}
h\left( t  \right) \triangleq \sum\nolimits_{i\in\cal N} {\left[ {\sum\nolimits_{k=0}^{K_i^{\text{pr}}} {w_{i,k}^{\text{pr}} \ y_{i,k}^{\text{pr}} \! \left( t  \right)}  + \sum\nolimits_{k=0}^{K_i^{\text{tr}}} {w_{i,k}^{\text{tr}} \ y_{i,k}^{\text{tr}} \! \left( t  \right)} } \right]}, \\
\end{equation}
while ensuring that the network is \emph{rate stable} \cite{Neely_book2}, i.e.,
\begin{equation}
\lim_{t\rightarrow\infty} \ \frac{1}{t} \ Q_i^{(d,m)}(t) = 0 \quad \text{with prob. 1,} \ \ \ \forall i, d,m.
\end{equation}

\section{Wireless Computing Network Capacity Region}
\label{sec: network capacity region}

For a given set of AgI services, the wireless computing network capacity region $\Lambda$ is defined as the closure of all service input rate matrices $\{\lambda_i^{(d,m)}\}$ that can be stabilized by a control algorithm. 


\begin{thm}\label{thm: network_capacity_region}
The wireless computing network capacity region $\Lambda$ consists of all average exogenous input rates $\{\lambda_i^{(d,m)}\}$ for which
there exist multi-commodity flow variables $f_{ij}^{(d,m)}$, $f_{\text{\emph{pr}},i}^{(d,m)}$, $f_{i,\text{\emph{pr}}}^{(d,m)}$,
together with 
probability values $\alpha_{i,k}^{\text{\emph{pr}}}$, $\alpha_{i,k}^{\text{\emph{tr}}}(\bf s)$, $\beta_{i,\text{\emph{pr}}}^{(d,m)}(k)$, $\beta_{i,\text{\emph{tr}}}^{(d,m)}({\bf s},k)$, $\eta_{ij}^{(d,m)}({\bf s},k,n)$,
for all $i,j\neq i,k,d,m$, and all network states $\bf{s}\in\cal S$,
such that:
\begin{align}
& \sum\nolimits_{j } \!{f_{ji}^{\left( {d,m} \right)}} \!\! + f_{\text{\emph{pr}},i}^{\left( {d,m} \right)} \!\! + \lambda_i^{(d,m)}  \le
\sum\nolimits_{j} \! {f_{ij}^{\left( {d,m} \right)}}  \!\! + f_{i,\text{\emph{pr}}}^{\left( {d,m} \right)}, \quad \forall i,d,m<M {\rm \ \ or \ \ } \forall i\neq d,m=M  \label{eq_thm1_stability} \\
& f_{\text{\emph{pr}},i}^{\left( {d,m+1} \right)}=\xi^{(m+1)}f_{i,\text{\emph{pr}}}^{\left( {d,m} \right)},\quad \forall i,d, m<M \label{eq_thm1_processing_conservation}\\
& f_{i,\text{\emph{pr}}}^{\left( {d,m} \right)} \le \frac{1}{r^{(m+1)}} \sum\nolimits_{k =0}^{K_{i}^{\text{\emph{pr}}}} {{\alpha _{i,k}^{\text{\emph{pr}}}} \beta _{i,\text{\emph{pr}}}^{\left( {d,m} \right)}(k) {R_{i,k}}},  \quad \forall i,d, m<M
\label{eq_thm1_processing_capacity_constraint} \\
& f_{ij}^{\left( {d,m} \right)} \le \sum\nolimits_{{\bf{s}} \in \cal S} {{\pi _{\bf{s}}}\sum\nolimits_{k=0}^{K_i^{\text{\emph{tr}}}} {\alpha _{i,k}^{\text{\emph{tr}}}({\bf s})\beta _{i,tr}^{\left( {d,m} \right)}
\!\!\left( {{\bf{s}},k} \right)} } \sum\nolimits_{n = 1}^{q_{i,{\bf s}}^{ - 1}\left(j \right)} \! {\left[ {{R_{i{q_{in}},k}} \! \left( {\bf{s}} \right) \! - \! {R_{i{q_{in - 1}},k}} \! \left( {\bf{s}} \right)} \right]\eta _{ij}^{\left( {d,m} \right)} \!\! \left( {{\bf{s}},k,n} \right)} ,\notag\\
&\qquad\qquad\qquad\qquad\qquad\qquad\qquad\qquad\qquad\qquad\qquad\qquad\qquad\qquad\qquad\ \forall i,j,d, m,\label{eq_thm1_flow_capacity_constraint}\\
& f_{i,\text{\emph{pr}}}^{\left( {d,{M }} \right)} = 0,\ f_{\text{\emph{pr}},i}^{\left( {d,0} \right)} = 0,\quad f_{dj}^{\left( {d,{M }} \right)} = 0, 
\ f_{i,\text{\emph{pr}}}^{\left( {d,m} \right)} \geq 0,\ f_{ij}^{\left( {d,m} \right)} \geq 0, \quad \forall i,j,d, m, \label{eq_thm1_flow_positve_conditions} \\
& \sum\nolimits_{k =0}^{{ K}_{i}^{\text{\emph{pr}}}} {{\alpha _{i,k}^{\text{\emph{pr}}}}} \le 1,\ \  \sum\nolimits_{k=0}^{{ K}_{i}^{\text{\emph{tr}}}} {{\alpha _{i,k}^{\text{\emph{tr}}}}}({\bf{s}}) \le 1,\quad  \forall i,\bf s,\label{alphas}  \\
& \sum\nolimits_{\left( {d,m} \right)} {\beta _{i,\text{\emph{pr}}}^{\left( {d,m} \right)}(k)}  \le 1,\ \sum\nolimits_{\left( {d,m} \right)} {\beta _{i,\text{\emph{tr}}}^{\left( {d,m} \right)}({\bf{s}},k)}  \le 1, \quad \forall i,{\bf s },k \label{betas}\\
& \sum\nolimits_{j } {\eta _{ij}^{\left( {d,m} \right)} \!\! \left( {\bf{s}},k,n \right)}  \le 1, \quad \forall i,{\bf s},k,n\label{etas}
\end{align}
where $\bf s$ denotes the network state, whose $(i,j)$-th element $({\bf{s}})_{ij}$ indicates the channel state of link $(i,j)$, $\pi_{\bf s}$ denotes the steady state probability distribution of the network state process ${\bf S}(t)$, and $q_{i,{\bf s}}^{-1}(j)$ in (\ref{eq_thm1_flow_capacity_constraint}) is the index of node $j$ in the sequence $\{q_{i,1},\cdots,q_{i,N-1}\}$, given the network state $\bf s$. Finally, with a slight abuse of notation, $R_{i{j},k}\left( {\bf{s}} \right)$ in (\ref{eq_thm1_flow_capacity_constraint}) denotes the maximum achievable rate over link $(i,j)$, given the network state $\bf s$ and the allocation of $k$ transmission resource units.

Furthermore, the minimum average network cost required for network stability is given by
\begin{align}
\label{eq_thm1_minimum_cost}
\bar h^* = \min \underline h
\end{align}
 where
\begin{equation}
\label{eq:underline}
\underline h = \sum\nolimits_{i\in\cal N} {\left( {\sum\nolimits_{k=0}^{K_i^{\text{\emph{pr}}}} {\alpha _{i,k}^{\text{\emph{pr}}} w_{i,k}^{\text{\emph{pr}}}}  + \sum\nolimits_{k=0}^{K_i^{\text{\emph{tr}}}}{w_{i,k}^{\text{\emph{tr}}} \sum\nolimits_{{\bf{s}} \in \cal S}{\pi_{\bf s}\alpha _{i,k}^{\text{\emph{tr}}}({\bf{s}})}} } \right)},
\end{equation}
and the minimization is over all $\alpha_{i,k}^{\text{\emph{pr}}}$, $\alpha_{i,k}^{\text{\emph{tr}}}(\bf s)$, $\beta_{i,\text{\emph{pr}}}^{(d,m)}(k)$,  $\beta_{i,\text{\emph{tr}}}^{(d,m)}({\bf s},k)$, and $\eta_{ij}^{(d,m)}({\bf s},k,n)$ satisfying (\ref{eq_thm1_stability})-(\ref{etas}).\hfill $\square$
\end{thm}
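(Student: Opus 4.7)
The plan is to follow the standard converse/achievability structure for stochastic network capacity regions (cf.\ \cite{Neely_book2}), adapted to handle (i) the chaining and flow scaling $\xi^{(m)}$ of AgI services, (ii) the processing constraints \eqref{ratepr}, and (iii) the broadcast-approach rate region \eqref{eq_rate_vs_group_rate2}--\eqref{eq_group_rate}. The two probability layers in the theorem statement---resource-allocation probabilities $(\alpha^{\text{pr}},\alpha^{\text{tr}})$ conditioned on the network state ${\bf s}$, and commodity/partition-split probabilities $(\beta^{\text{pr}},\beta^{\text{tr}},\eta)$ conditioned on the chosen resource level---mirror the per-slot control decisions of Sec.~\ref{sec: communication protocol}, which is the structural identification that makes the argument go through. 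For the \emph{converse}, I fix a policy that rate-stabilizes the queues at some admissible $\lambda$, average the queueing recursion \eqref{eq_queueing_dynamic} over $T$ slots, divide by $T$, and let $T\to\infty$. Rate stability combined with the bounded fourth moment on arrivals yields \eqref{eq_thm1_stability} after defining $f_{ij}^{(d,m)}$, $f_{i,\text{pr}}^{(d,m)}$, $f_{\text{pr},i}^{(d,m)}$ as the long-run time averages of the corresponding $\mu$-variables; \eqref{eq_thm1_processing_conservation} and the boundary conditions \eqref{eq_thm1_flow_positve_conditions} follow from time-averaging \eqref{chain} together with the facts $Q_d^{(d,M)}(t)\equiv 0$ and $a_i^{(d,m)}(t)\equiv 0$ for $m>0$. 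The $\alpha$'s, $\beta$'s, and $\eta$'s are then defined as long-run fractions of slots (conditioned on ${\bf s}$ where appropriate) in which each discrete decision is taken; their existence follows from the ergodicity of ${\bf S}(t)$ and the boundedness of the $y$'s, and the inequalities \eqref{eq_thm1_processing_capacity_constraint}--\eqref{etas} follow by time-averaging \eqref{ratepr} and \eqref{eq_group_rate} and invoking dominated convergence.

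For the \emph{achievability} direction, given any tuple $(\alpha,\beta,\eta,f)$ satisfying \eqref{eq_thm1_stability}--\eqref{etas} for some $\lambda$, I construct an \emph{oblivious} stationary randomized policy that, at each slot $t$, (i) samples the processing resource level at node $i$ from $\alpha_{i,k}^{\text{pr}}$ and the transmission level from $\alpha_{i,k}^{\text{tr}}({\bf S}(t))$, independently of backlogs; (ii) allocates the resulting processing/transmission capacity among commodities using the $\beta$'s; and (iii) via the final-instruction step of Sec.~\ref{sec: communication protocol}, assigns each broadcast-approach partition $n$ to a unique receiver--commodity pair according to the $\eta$'s. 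Under this policy the per-slot expected net inflow into each queue is nonpositive by \eqref{eq_thm1_stability}. A quadratic Lyapunov argument with $L({\bf Q})=\tfrac{1}{2}\sum_{i,d,m}\bigl(Q_i^{(d,m)}\bigr)^2$ and the uniform per-slot rate bounds then yields uniformly bounded drift, implying strong stability and hence the required rate stability. The minimum-cost claim \eqref{eq_thm1_minimum_cost} then follows because the feasible set for $(\alpha,\beta,\eta,f)$ is compact and the objective \eqref{eq:underline} is linear, so the minimum is attained; the converse gives $\bar h\ge \bar h^*$ for any stabilizing policy (its induced probabilities are feasible), while the achievability applied to an optimal tuple attains $\bar h^*$.

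The main obstacle will be the careful treatment of the partition-assignment variables $\eta_{ij}^{(d,m)}({\bf s},k,n)$. Because every node in $\Omega_{i,n}$ decodes partition $n$, the transmitter's final instruction must assign the forwarding/processing responsibility for each decoded bit to a \emph{unique} node in $\Omega_{i,n}$ to prevent double-counting in the queue recursion---the constraint \eqref{etas} encodes exactly this non-duplication across the $N-n$ candidate receivers. Ensuring consistency of this assignment in both directions is the technical crux: in the converse, the $\eta$'s must be extracted from per-slot realized assignments in a way that is compatible with \eqref{eq_thm1_flow_capacity_constraint} after time-averaging over the Markov network state; in the achievability, the randomized assignment produced from the $\eta$'s must respect \eqref{eq_group_rate} realization-by-realization (not merely in expectation), which I handle by conditioning on the realized channel state after transmission and allocating the layer-$n$ information to at most one downstream queue per commodity within $\Omega_{i,n}$.
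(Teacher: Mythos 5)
Your overall architecture (converse by time-averaging a stabilizing policy, achievability by a stationary randomized policy plus a quadratic Lyapunov drift, and compactness plus linearity for the cost claim) matches the paper's, but your converse has a genuine gap: it does not cover policies that use \emph{multi-copy routing}, which the theorem explicitly includes in the policy space (the remark after the theorem states that the single-copy stationary randomized policy is optimal ``among all stabilizing algorithms, including algorithms that use multi-copy routing''). Concretely, the step where you claim that \eqref{etas} ``follows by time-averaging \eqref{eq_group_rate}'' fails: \eqref{eq_group_rate} is a \emph{per-receiver} constraint, bounding what each individual $q_{i,u}\in\Omega_{i,n}$ may retain from partition $n$; it does not bound the \emph{sum over receivers}. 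A multi-copy policy can have every node in $\Omega_{i,n}$ retain the full partition, so the time-averaged retained flows give $\sum_j \eta_{ij}^{(d,m)}({\bf s},k,n)$ as large as $N-n$, violating \eqref{etas}, and the raw time averages of the $\mu$'s then do not yield a feasible point of \eqref{eq_thm1_stability}--\eqref{etas}. The paper resolves this by counting only the set $\mathcal I^{(d,m)}(t)$ of information units that are ultimately delivered (under the idealized assumption that, upon delivery, all equivalent copies are discarded): each delivered unit has a unique ancestry through the network, so the flow of \emph{useful} units satisfies exact conservation \eqref{eq_packets_flow_balance} and the non-duplication bound \eqref{eq_bound_eta}, while still dominating the exogenous rates via \eqref{eq_rate_stable_a_X}. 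Without this device (or an equivalent one), your converse only establishes necessity within the single-copy policy class.

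Two secondary points. First, for a general stabilizing policy the long-run fractions of slots in which each discrete decision is taken need not converge; ``ergodicity of ${\bf S}(t)$'' does not help because the policy may be non-stationary and queue-dependent. The paper passes to a subsequence $\{t_u\}$ along which the time-average cost attains its $\liminf$ and all bounded ratio sequences converge, which is needed both for feasibility and for the cost identity \eqref{eq:underline}. Second, since the transmitter only has delayed CSI, the natural empirical conditioning is on the \emph{previous} network state $\tilde{\bf s}$; the paper defines $\tilde\alpha_{i,k}^{\text{tr}}(\tilde{\bf s})$, $\tilde\beta_{i,\text{tr}}^{(d,m)}(\tilde{\bf s},k)$ and then converts to the state-${\bf s}$-conditioned quantities appearing in \eqref{eq_thm1_flow_capacity_constraint} via $\pi_{\bf s}=\sum_{\tilde{\bf s}}\pi_{\tilde{\bf s}}P_{\tilde{\bf s}{\bf s}}$. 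Your phrase ``conditioned on ${\bf s}$ where appropriate'' glosses over this conversion, which is where the Markov structure of the channel actually enters the proof.
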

\begin{proof}
See Appendix \ref{appendix:_proof_of_theorem_1}.
\end{proof}

In the above theorem, 
Eq. \eqref{eq_thm1_stability} are flow conservation constraints,\footnote{Note that final commodity $(d,M)$ staisfies flow conservation at all nodes except at its destination $d$, where it is immediately consumed upon arrival.} Eqs. \eqref{eq_thm1_processing_capacity_constraint} and \eqref{eq_thm1_flow_capacity_constraint} are rate constraints, and
Eq. \eqref{eq_thm1_flow_positve_conditions} indicates non-negativity and flow efficiency constraints. 
The probability values $\alpha_{i,k}^{*\text{pr}}$, $\alpha_{i,k}^{*\text{tr}}(\bf s)$, $\beta_{i,\text{pr}}^{*(d,m)}(k)$,  $\beta_{i,\text{tr}}^{*(d,m)}({\bf s},k)$ and $\eta_{ij}^{*(d,m)}({\bf s},k,n)$ define a \emph{stationary randomized policy} that uses \emph{single-copy routing} -- only one copy of each information unit is allowed to flow through the network - and it is optimal  among all stabilizing algorithms (including  algorithms that use \emph{multi-copy routing}). Specifically, the parameters of the stationary randomized policy are defined as:
\begin{itemize}
\item  $\alpha_{i,k}^{\text{pr}}$: the probability that $k$ processing resource units are allocated at node $i$
\item  $\alpha_{i,k}^{\text{tr}}({\bf s})$: the conditional probability that $k$ transmission resource units are allocated at node $i$, given the network state $\bf s$;
\item  $\beta_{i,\text{pr}}^{(d,m)}(k)$: the conditional probability that node $i$ processes commodity $(d,m)$, given the allocation of $k$ processing resource units;
\item  $\beta_{i,\text{tr}}^{(d,m)}({\bf s},k)$: the conditional probability that node $i$ transmits commodity $(d,m)$, given the network state $\bf s$ and the allocation of $k$ transmission resource units;
\item  $\eta_{ij}^{(d,m)}({\bf s},k,n)$: the conditional probability that node $i$ forwards the information of commodity $(d,m)$ in the $n$-th partition to node $j$, when the network state is $\bf s$
and $k$ transmission resource units are allocated. 
\end{itemize}


It is important to note that this optimal stationary randomized policy is hard to compute in practice, as it requires the knowledge of $\{\lambda_i^{(d,m)}\}$ and solving a complex nonlinear program. However, its existence is essential for proving the performance of our proposed algorithm.

\section{Dynamic Wireless Computing Network Control Algorithm}
\label{sec: algorithm}

We propose a dynamic wireless computing network control strategy that accounts for both transmission and processing flow scheduling and resource allocation decisions in a fully distributed manner.

\subsection{The DWCNC Algorithm}
\label{subsec: alg_description}
{\bf Dynamic Wireless Computing Network Control (DWCNC):}

\underline{Local processing decisions:}
At the beginning of timeslot $t$, each node $i$ observes its local queue backlogs and performs the following operations:

\begin{enumerate}

\item Compute the \emph{processing utility weight} of each commodity $(d,m)$:
\begin{align}
W_i^{(d,m)}(t) \! \buildrel \Delta \over = \! \frac{1}{r^{(m+1)}} \! \left[ Q_i^{(d,m)}(t) \!-\! \xi^{(m\Scale[0.7]{+1})}Q_i^{(d,m\Scale[0.7]{+1})}(t)  \right]^{\!+} , \notag
\end{align}
where we denote $[x]^+= \max\{x,0\} $. Note that $W_i^{(d,m)}(t)$ indicates the ``potential benefit''
of executing function $(m\!+\!1)$ to process commodity $(d,m)$ into commodity $(d,m\!+\!1)$ at time $t$, in terms of local congestion reduction per processing operation. 

\item Compute the optimal number of processing resource units $k_{\text{pr}}^\dag$ to allocate and the optimal commodity $(d,m)_{\text{pr}}^{\dag}$ to process:
\begin{equation}
\label{eq_processing_metric}
\left[ {k_{\text{pr}}^\dag,(d,m)_{\text{pr}}^\dag} \right]\! =\! \mathop {\arg \max }\limits_{k,\left( {d,m} \right)} \left\{ {{R_{i,k}}W_i^{(d,m)}\!(t) - Vw_{i,k}^{\text{pr}}} \right\},
\end{equation}
where $V$ is a non-negative control parameter that determines the degree to which cost is emphasized.

\item Make the following flow rate assignment decisions:
\begin{align}
& \mu_{i,\text{pr}}^{(d,m)_{\text{pr}}^\dag}(t) = {{{R_{i,{k_{\text{pr}}^\dag}}}} \mathord{\left/{\vphantom {{{R_{i,{k^\dag}}}} {{r^{(m+1)}}}}} \right.\kern-\nulldelimiterspace} {{r^{(m^\dag_{\text{pr}}+1)}}}}; \notag\\
& \mu_{i,\text{pr}}^{(d,m)}(t) = 0, \quad \forall (d,m)\neq (d,m)_{\text{pr}}^\dag. \notag
\end{align}

\end{enumerate}

\underline{Local wireless transmission decisions:}
At the beginning of timeslot $t$, each node $i$ observes its local queue backlogs, the queue backlogs of its potential RXs and the associated statistical CSI, and performs the following operations:
\begin{enumerate}
\item For each outgoing link $(i,j)$, compute the \emph{differential backlog weight} of each commodity $(d,m)$: 
\begin{equation}
W_{ij}^{(d,m)}(t) \triangleq \left[Q_i^{(d,m)}(t)-Q_j^{(d,m)}(t)\right]^+.\notag
\end{equation}
\item For each transmission resource allocation choice $k\in\{0,\dots, K_i^{\text{tr}}\}$, compute the \emph{transmission utility weight} 
of each commodity $(d,m)$:
\begin{align}
&\Scale[0.95]{W_{i,k,\text{tr}}^{(d,m)}\left( t \right) \triangleq \sum\limits_{{\bf{s}} \in S} {\Pr \left( {\left. {{\bf{S}}(t) = {\bf{s}}} \right|{\bf{S}}\left( {t - 1} \right)} = \tilde{\bf{s}} \right)}  \sum\limits_{n = 1}^{N - 1} \!\! {\left[ {{R_{i{q_{i,n}},k}}\left( {\bf{s}} \right) - {R_{i{q_{i,n - 1}},k}}\left( {\bf{s}} \right)} \right] \! \mathop {\max }\limits_{j \in {\Omega _{i,n}}\left( {\bf{s}} \right)} \!\left\{ {W_{ij}^{(d,m)} \! \left( t \right)} \! \right\}} },
\label{Wp}
\end{align}
where $ \tilde{\bf{s}}$ denotes the CSI feedbacks at time $t-1$,  
and, with an abuse of notation, ${\Omega _{i,n}}({\bf{s}})$ is used to indicate the dependence of ${\Omega _{i,n}}$ on the network state $\bf s$.
\label{step: compute_trans_utility_weight}
\item Compute the optimal number of transmission resource units $k_{\text{tr}}^{\dag}$ to allocate and the optimal commodity $(d,m)_{\text{tr}}^{\dag}$ to transmit:
\begin{equation}
\label{eq_transmission_metric}
\left[ {{k_{\text{tr}}^{\dag}},{{\left( {d,m} \right)}_{\text{tr}}^{\dag}}} \right] = \mathop {\arg \max }\limits_{k,\left( {d,m} \right)} \left\{ {W_{i,k,\text{tr}}^{(d,m)}\left( t \right)}-Vw_{i,k}^{\text{tr}} \right\}. 
\end{equation}
If $k_{\text{tr}}^{\dag}=0$, node $i$ keeps silent in timeslot $t$.
\label{step: choose_commodity_resource}

\item After receiving the CSI feedbacks, node $i$ identifies the information decoded by each RX and assigns the processing/forwarding responsibility for the $n$-th partition of the transmitted information to the RX in $\Omega_{i,n}({\bf S}(t))$ with the largest positive $W_{ij}^{(d,m)}(t)$, while all other RXs in $\Omega_{i,n}({\bf S}(t))$ and node $i$ discard the information. If no receiver in $\Omega_{i,n}({\bf S}(t))$ has positive $W_{ij}^{(d,m)}(t)$, node $i$ retains the information of partition $n$, while all the receivers in $\Omega_{i,n}({\bf S}(t))$ discard it.
\label{step: forwarding_decision}

\end{enumerate}




\bigskip


Remarks:
\begin{itemize}
\item In the local processing decisions,
maximizing the metric in \eqref{eq_processing_metric} over $[k,(d,m)]$ can be decomposed into first maximizing $W_i^{(d,m)}(t)$ over $(d,m)$ and
then maximizing the metric over $k$ given the maximized $W_i^{(d,m)}(t)$. The computational complexity is 
$O(K_{i}^{\text{pr}} + N M)$.

\item In Step \ref{step: compute_trans_utility_weight} of the local transmission decisions, $W_{i,k,\text{tr}}^{(d,m)}(t)$ is computed according to \eqref{Wp} using the transition probabilities $\Pr(\left.{\bf S}(t)={\bf s}\right|{\bf S}(t-1)=\tilde {\bf s})$, known as the statistical CSI, but the complexity can be high due to the possibly exponentially large network state space with respect to the number of links. However, the computation can be significantly simplified when the channel realizations of the wireless links are mutually independent, which is described in the next subsection.

\item The fact of discarding decoded information at the RXs that do not get the processing/forwarding responsibility during Step \ref{step: forwarding_decision} of the local transmission decisions, implies that DWCNC is a single-copy routing algorithm.
\end{itemize}

\subsection{Transmission Utility Weight with Independent Links and Discrete Code Layers}

Recall that, in practice, when using the broadcast approach, each node uses a $L_i$ discrete code layers, with $R_{ij,k}(t)$ taking values in $R_{i,k}$ as described in Sec. \ref{sec: wireless_transmission_model}).

Let $\bar \Omega_{i,l}({\bf S}(t))$ denote the set of receivers 
that have channel gain no smaller than $\bar g_{i,l}$ at time $t$, \ie $g_{ij}(t)\ge \bar g_{i,l}$ for all $j\in \bar \Omega_{i,l}({\bf S}(t))$, and $g_{ij}(t)< \bar g_{i,l}$ for all $j\notin \bar \Omega_{i,l}({\bf S}(t))$.

Given ${\bf S}(t)=\bf s$ and $k$, we have the following two possible cases for the maximum achievable transmission rate of the $n$-th partition: i) ${R_{i{q_{i,n}},k}}\left( {\bf{s}} \right) - {R_{i{q_{i,n - 1}},k}}\left( {\bf{s}} \right)=0$; ii) ${R_{i{q_{i,n}},k}}\left( {\bf{s}} \right) - {R_{i{q_{i,n - 1}},k}}\left( {\bf{s}} \right) = \sum\nolimits_{l = l_0}^{l_1} {\left( {\bar R_{i,k}^l - \bar R_{i,k}^{l - 1}} \right)} $, for some $l_0$ and $l_1$ satisfying $1\le l_0 \le l_1\le L_i$, with ${{\bar \Omega }_{i,l}}({\bf{s}}) = {\Omega _{i,n}}({\bf{s}})$ for $l_0\le l\le l_1$. Then we have, for all $i$, ${\bf s}$, $k$, $(d,m)$, and $t$,
\begin{equation}
\Scale[1]{\sum\limits_{n = 1}^{N - 1}  \left[ {{R_{i{q_{i,n}},k}}\left( {\bf{s}} \right) - {R_{i{q_{i,n - 1}},k}}\left( {\bf{s}} \right)} \right]\mathop {\max }\limits_{j \in {\Omega _{i,n}}\left( {\bf{s}} \right)} \left\{ {W_{ij}^{(d,m)}\left( t \right)} \right\} = \sum\limits_{l = 1}^{{L_i}} {\left( {\bar R_{i,k}^l - \bar R_{i,k}^{l - 1}} \right)\mathop {\max }\limits_{j \in {{\bar \Omega }_{i,l}}\left( {\bf{s}} \right)} \left\{ {W_{ij}^{(d,m)}\left( t \right)} \right\}}},\notag
\end{equation}
based on which we can rewrite Eq. \eqref{Wp} as follows:
\begin{align}
W_{i,k,\text{tr}}^{(d,m)}\left( t \right) &= \sum\nolimits_{l = 1}^{{L_i}} {\left( {\bar R_{i,k}^l - \bar R_{i,k}^{l - 1}} \right)\sum\nolimits_{{\bf{s}} \in \mathcal S} {\Pr \left( {\left. {{\bf{S}}(t) = {\bf{s}}} \right|{\bf{S}}\left( {t - 1} \right) } \right)\mathop {\max }\nolimits_{j \in {{\bar \Omega }_{i,l}}\left( {\bf{s}} \right)} \left\{ {W_{ij}^{(d,m)}\left( t \right)} \right\}} } \notag\\
&=\sum\nolimits_{l = 1}^{{L_i}} {\left( {\bar R_{i,k}^l - \bar R_{i,k}^{l - 1}} \right)\mathbb{E}\left\{ \left. {\mathop {\max }\nolimits_{j \in {{\bar \Omega }_{i,l}}\left( {{\bf{S}}(t)} \right)} \left\{ {W_{ij}^{(d,m)}\left( t \right)} \right\}} \right| {\mathcal H}(t)\right\}},
\label{eq_trans_utility_weight2}
\end{align}
where $\mathcal H(t)\triangleq \{{\bf Q}(t), {\bf S}(t-1)\}$ is the ensemble of queue backlog observations at time $t$ and the CSI feedbacks at time $t-1$.


Let $1_{ij,l}^{(d,m)}(t)$ denote the indicator that takes value $1$ if receiver $j$ has the largest differential backlog $W_{ij}^{(d,m)}(t)$ among the receivers in $\bar \Omega_{i,l}({\bf S}(t))$, and $0$ otherwise. Then, we have
\begin{align}
\mathbb{E}\left\{ {\left. {\mathop {\max }\nolimits_{j \in {{\bar \Omega }_{i,l}}\left( {{\bf{S}}(t)} \right)} \left\{ {W_{ij}^{(d,m)}\left( t \right)} \right\}} \right|\mathcal H(t)} \right\} &= \mathbb{E}\left\{ {\left. {\sum\nolimits_j {W_{ij}^{(d,m)}\left( t \right)1_{ij,l}^{\left( {d,m} \right)}\left( {{\bf{S}}(t)} \right)} } \right|\mathcal H(t)} \right\}\nonumber\\
&= \sum\nolimits_j {W_{ij}^{(d,m)}\left( t \right)\varphi _{ij,l}^{\left( {d,m} \right)}\left( {\mathcal H (t)} \right)},
\label{eq_trans_utility_weight3}
\end{align}
where $\varphi _{ij,l}^{\left( {d,m} \right)}\left( {\mathcal H (t)} \right)$ is the conditional probability that $1_{ij,l}^{(d,m)}(t)$ takes value $1$ given $\mathcal H (t)$.

Plugging \eqref{eq_trans_utility_weight3} into \eqref{eq_trans_utility_weight2} to compute $W_{i,k,\text{tr}}^{(d,m)}\left( t \right)$, we can replace Step \ref{step: compute_trans_utility_weight} of the local transmission decisions of DWCNC in Sec. \ref{subsec: alg_description} with the following two sub-steps:
\begin{enumerate}[2a)]
\item For each commodity $(d,m)$, sort the receivers of node $i$ according to their differential backlog weight $W_{ij}^{(d,m)}(t)$  in non-increasing order. Let $\Psi_{ij}^{(d,m)}(t)$ denote the set of receivers of node $i$ with index smaller
than the index of receiver $j$ in the sorted list at time $t$. In this case, each receiver in $\Psi_{ij}^{(d,m)}(t)$ has no smaller differential backlog weight than receiver $j$.
\item For each transmission resource allocation choice $k\in\{0,\dots, K_i^{\text{tr}}\}$, compute the \emph{transmission utility weight} 
of each commodity $(d,m)$:
\begin{equation}
W_{i,k,\text{tr}}^{(d,m)}\left( t \right) = \sum\nolimits_{l = 1}^{{L_i}} {\left( {\bar R_{i,k}^l - \bar R_{i,k}^{l - 1}} \right)\sum\nolimits_j {W_{ij}^{(d,m)}\left( t \right)\varphi _{ij,l}^{\left( {d,m} \right)}\left( {\mathcal H(t)} \right)} }.
\end{equation}
\end{enumerate}

Computing $\varphi _{ij,l}^{\left( {d,m} \right)}\left( {\mathcal H(t)} \right)$, which requires the statistical CSI, can be significantly simplified if the channel
realizations of the links are mutually independent. In this case, the value of $\varphi _{ij,l}^{\left( {d,m} \right)}\left( {\mathcal H(t)} \right)$ can be obtained from a simple
multiplication as follows:
\begin{align}
&\varphi _{ij,l}^{\left( {d,m} \right)}\left( {\mathcal H(t)} \right) = \Pr \left( {\left. {{g_{ij}}\left( t \right) \ge {\bar g_{i,l}}} \right|{s_{ij}}\left( {t - 1} \right)} \right)\prod\nolimits_{v \in \Psi _{ij}^{(d,m)}(t)} {\Pr \left( {\left. {{g_{iv}}\left( t \right) < {\bar g_{i,l}}} \right|{s_{iv}}\left( {t - 1} \right)} \right)}\notag \\
&= \sum\nolimits_{l' = l}^{{L_i}} {\Pr \left( {\left. {{s_{ij}}\left( t \right) = {\bar s_{i,l'}}} \right|{s_{ij}}\left( {t - 1} \right)} \right)} \prod\nolimits_{v \in \Psi _{ij}^{(d,m)}(t)} {\sum\nolimits_{l' = 0}^{l - 1} {\Pr \left( {\left. {{s_{iv}}\left( t \right) = {\bar s_{i,l'}}} \right|{s_{iv}}\left( {t - 1} \right)} \right)} },
\label{eq_trans_utility_weight4}
\end{align}
where $\Pr(\left.s_{ij}(t)=\bar s_{i,l'}\right|s_{ij}(t-1)=\bar s_{i,l''})$ is the statistical CSI of link $(i,j)$.

The computational complexity associated with the transmission decisions made by node $i$ at each timeslot is $O(MN(\log_2N+L_iK_i^{\text{tr}}))$, which is dominated by sorting the receivers according to their differential backlogs weights, for each commodity, and computing the transmission utility weights for all commodities and resource allocation choices.

\section{Performance Analysis}
\label{sec: optimality}

In this section, we analyze the throughput-optimality and average cost performance of DWCNC, described by the following theorem.

\begin{thm}
\label{thm: network_capacity_region2}
For any exogenous input rate matrix $\bm \lambda \buildrel \Delta \over = \{\lambda_i^{(d,m)}\}$ strictly interior to the capacity region $\Lambda$,
DWCNC stabilizes the wireless computing network, while achieving an average total resource cost arbitrarily close to the minimum average cost $\overline h^*({\bm \lambda})$
with probability $1$; \ie 
\begin{align}
& \limsup_{t \rightarrow \infty} \frac{1}{t}  \sum\nolimits_{\tau=0}^{t-1}  h(\tau) \leq \overline h^*({\bm \lambda}) +\frac{NB}{V}, \quad \text{with prob. 1,} \label{que}  \\
& \underset{t \rightarrow \infty}{\limsup} \frac{1}{t}   \displaystyle \sum\nolimits_{\tau, i,d,m}  Q_i^{(d,m)}(\tau) \leq \frac{NB + V\left[ \overline h^*
( {\bm \lambda} + \epsilon {\bf 1}  ) - \overline h^* ({\bm \lambda}) \right] }{\epsilon}, \quad \text{with prob. 1,}\label{eq_ave_cost}
\end{align}
where $B$ is a constant that depends on the system parameters $R_{i,K_i^{\text{\emph{tr}}}}(\bf s)$, $R_{i,K_i^{\text{\emph{pr}}}}$, $A_{\max}$,
$\xi^{(m)}$, 
and $r^{(m)}$; $\epsilon$ is a positive constant satisfying $\left({\bm \lambda}+\epsilon \bf{1}\right)\in \Lambda$; 
and $\overline h^*({\bm\lambda})$ denotes the average cost obtained by the optimal solution given input rates $\bm \lambda$.\hfill $\square$
\end{thm}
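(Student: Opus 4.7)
The plan is to follow the standard Lyapunov drift-plus-penalty methodology of \cite{Neely_book2}, suitably adapted to handle the chained/scaling commodity structure of AgI services and the partition-based broadcast-approach transmissions. Define the quadratic Lyapunov function
\begin{equation}
L(\mathbf{Q}(t)) \triangleq \tfrac{1}{2}\sum\nolimits_{i,d,m}\bigl(Q_i^{(d,m)}(t)\bigr)^2,
\end{equation}
and the one-slot conditional drift $\Delta(t) \triangleq \mathbb{E}\{L(\mathbf{Q}(t+1))-L(\mathbf{Q}(t))\mid \mathcal H(t)\}$. Squaring both sides of the queue evolution \eqref{eq_queueing_dynamic}, using $([a-b]^+ + c)^2 \le a^2 + b^2 + c^2 - 2a(b-c)$, and leveraging the boundedness of $R_{i,K_i^{\text{pr}}}$, $R_{i,K_i^{\text{tr}}}(\mathbf{s})$, $A_{\max}$, $\xi^{(m)}$, and $r^{(m)}$, I obtain the standard bound $\Delta(t) + V\mathbb{E}\{h(t)\mid\mathcal H(t)\} \le NB + V\mathbb{E}\{h(t)\mid\mathcal H(t)\} - \sum_{i,d,m} Q_i^{(d,m)}(t)\,\mathbb{E}\{\text{net-outflow}_i^{(d,m)}(t)\mid\mathcal H(t)\}$, for a suitable constant $B$.

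Next I regroup the right-hand side along the local decisions of each node. By using the chain constraint \eqref{chain} and collecting the terms involving a given processing action at node $i$, the processing contribution reduces to $-[Q_i^{(d,m)}(t)-\xi^{(m+1)}Q_i^{(d,m+1)}(t)]\mu_{i,\text{pr}}^{(d,m)}(t)+Vw_{i,k}^{\text{pr}}y_{i,k}^{\text{pr}}(t)$, which is exactly what DWCNC's metric \eqref{eq_processing_metric} minimizes (after clipping by $[\cdot]^+$, which is justified because $\mu_{i,\text{pr}}^{(d,m)}\ge 0$ and the optimal choice would not process if the clipped weight is zero). Similarly, partitioning $i$'s transmission into the $N-1$ disjoint receiver groups $\Omega_{i,n}(\mathbf{S}(t))$ via \eqref{eq_rate_vs_group_rate2}--\eqref{eq_group_rate}, and noting that assigning the responsibility for partition $n$ to the node in $\Omega_{i,n}$ with the largest positive $W_{ij}^{(d,m)}(t)$ maximizes the per-partition queue reduction, the expected transmission contribution conditioned on $\mathcal H(t)$ equals $W_{i,k,\text{tr}}^{(d,m)}(t)-Vw_{i,k}^{\text{tr}}y_{i,k}^{\text{tr}}(t)$ as defined in \eqref{Wp}. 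Hence by construction DWCNC \emph{minimizes} the right-hand side of the drift-plus-penalty bound over all causal policies that use only $\mathcal H(t)$.

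Now invoke Theorem~\ref{thm: network_capacity_region} for the perturbed rate vector $\bm\lambda+\epsilon\mathbf{1}\in\Lambda$: there exists a stationary randomized policy (depending only on $\mathbf{S}(t)$, independent of $\mathbf{Q}(t)$) whose expected net outflow at every queue dominates the arrival rate by at least $\epsilon$ and whose expected cost equals $\overline h^*(\bm\lambda+\epsilon\mathbf{1})$. Plugging that policy into the upper bound gives
\begin{equation}
\Delta(t)+V\mathbb{E}\{h(t)\mid\mathcal H(t)\}\le NB + V\,\overline h^*(\bm\lambda+\epsilon\mathbf{1}) - \epsilon\sum\nolimits_{i,d,m} Q_i^{(d,m)}(t).
\end{equation}
Taking expectations, summing over $\tau=0,\dots,t-1$, telescoping, dividing by $t$, and letting $t\to\infty$ yields the expected versions of \eqref{que} and \eqref{eq_ave_cost} (the cost bound uses $\epsilon\to 0$ and the continuity of $\overline h^*(\cdot)$; the backlog bound follows directly).

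The remaining task is to upgrade the expected bounds to almost-sure statements. This is where I would invoke the fourth-moment assumption $\mathbb{E}\{(\sum_{(d,m)} a_i^{(d,m)}(t))^4\}\le A_{\max}^4$: combining the deterministic per-slot bound on $|L(\mathbf{Q}(t+1))-L(\mathbf{Q}(t))|$ (which is itself a quadratic in the uniformly bounded service rates and in $\sum_{(d,m)}a_i^{(d,m)}(t)$) with a standard strong-law argument for martingale-difference sequences with bounded second moments (as in Neely's book, Lemma~4.1), I conclude that the sample-path time-averages of $h(t)$ and of $\sum_{i,d,m}Q_i^{(d,m)}(t)$ coincide with their expected values with probability one. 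The main obstacle in the whole argument is the second step: correctly showing that the DWCNC transmission rule—whose inputs are only statistical CSI at slot $t-1$ and current backlogs—actually achieves the maximum expected per-slot partition-wise queue reduction across all policies that must commit resource and commodity choices before observing $\mathbf{S}(t)$, so that the comparison with the oracle-free stationary randomized policy is legitimate; this is what justifies conditioning on $\mathcal H(t)$ rather than on $\mathbf{S}(t)$ and is exactly what \eqref{eq_trans_utility_weight2}--\eqref{eq_trans_utility_weight4} are designed to encode.
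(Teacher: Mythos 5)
Your proposal follows essentially the same route as the paper: the quadratic Lyapunov drift-plus-penalty bound with constant $NB$, the observation that DWCNC's local processing and transmission metrics greedily optimize the per-node terms conditioned on $\mathcal H(t)$ (the paper isolates this as Lemma~\ref{lem: maximize metric}), comparison against the stationary randomized policy from Theorem~\ref{thm: network_capacity_region} at $\bm\lambda+\epsilon\mathbf 1$, and the upgrade to almost-sure convergence via the fourth-moment condition and the result of \cite{Neely_prob_1}. The only caveat worth noting is that the optimality of DWCNC's per-slot decisions should be stated over single-copy-routing policies (as in the paper's Lemma~\ref{lem: maximize metric}) rather than over all causal policies, but since the comparison policy from Theorem~\ref{thm: network_capacity_region} is itself single-copy, the argument is unaffected.
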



\begin{proof}
See Appendix \ref{appendix: proof of theorem 2}.
\end{proof}

The finite bound on the expected total queue length in Theorem \ref{thm: network_capacity_region2} implies that the wireless computing network is strongly stable. The parameter $V$ can be increased to
push the average resource cost arbitrarily close to the minimum cost required for network stability, $\overline h^*({\bm \lambda})$, with a linear increase in average network congestion or queueing delay. Thus, Theorem \ref{thm: network_capacity_region2} demonstrates a $[O(1/V),O(V)]$ cost-delay tradeoff.

\section{Numerical Experiments}

In this section, we present numerical results obtained from simulating the performance of the DWCNC algorithm for the delivery of two AgI services over an eleven node network during $10^6$ timeslots. The numerical values presented in this section for resource allocation costs, communication flow rates, processing flow rates, and queue backlogs are all measured in normalized units.

\subsection{Network Structure}

We consider the wireless computing network illustrated  in Fig. \ref{fig: simulation_network}. All $11$ nodes represent computing locations: nodes $1$, $6$, and $7$ represent access points (APs), while all other nodes are user-end (UE) devices. We list the $(X,Y)$ coordinations of all the nodes' locations in Table \ref{tab: node coordinations} in normalized distance units. 
In terms of processing resources, each AP has $5$ resource allocation choices, with associated cost and processing rate $w_{i,k}^{\text{pr}} = k$ and $R_{i,k}=20 k$ for $i\in\{1, 6, 7\}$ and $k\in\{0,1,\dots,4\}$,  while each UE has $2$ resource allocation choices, with associated cost and rate $w_{i,k}^{\text{pr}} = 2k$ and $R_{i,k}=20k$ for $i\in\{2,3,4,5,8,9,10,11\}$ and $k\in\{0,1\}$. Note that processing cost is lower at the APs than at the UEs. 
In terms of transmission resources, each node has $2$ resource allocation choices of cost $w^{\text{tr}}_{i,0}=0$ and $w^{\text{tr}}_{i,1}=1$. 
The associated transmission rates are given in Section \ref{comm_setup}.


The edges in Fig. \ref{fig: simulation_network} represent the active wireless links, whose channel realizations are mutually independent. In addition, the channel realizations of each link are independently and identically distributed (i.i.d.) across timeslots.\footnote{Note that i.i.d. channel state evolution 
is approximately fulfilled when the timeslot length is equal to the coherence time of the wireless medium.} We assume there is Line of Sight (LOS) between APs, and the links between APs have Rician fading (see Ref. \cite{molisch_book}) with the Rice factor equal to $5$ dB. On the other hand, we assume the rest of the links exhibit Rayleigh fading (see Ref. \cite{molisch_book}), where no LOS exists. The path loss coefficient of each link is $3$.

\begin{figure}
\centering
\includegraphics[width=4.2in]{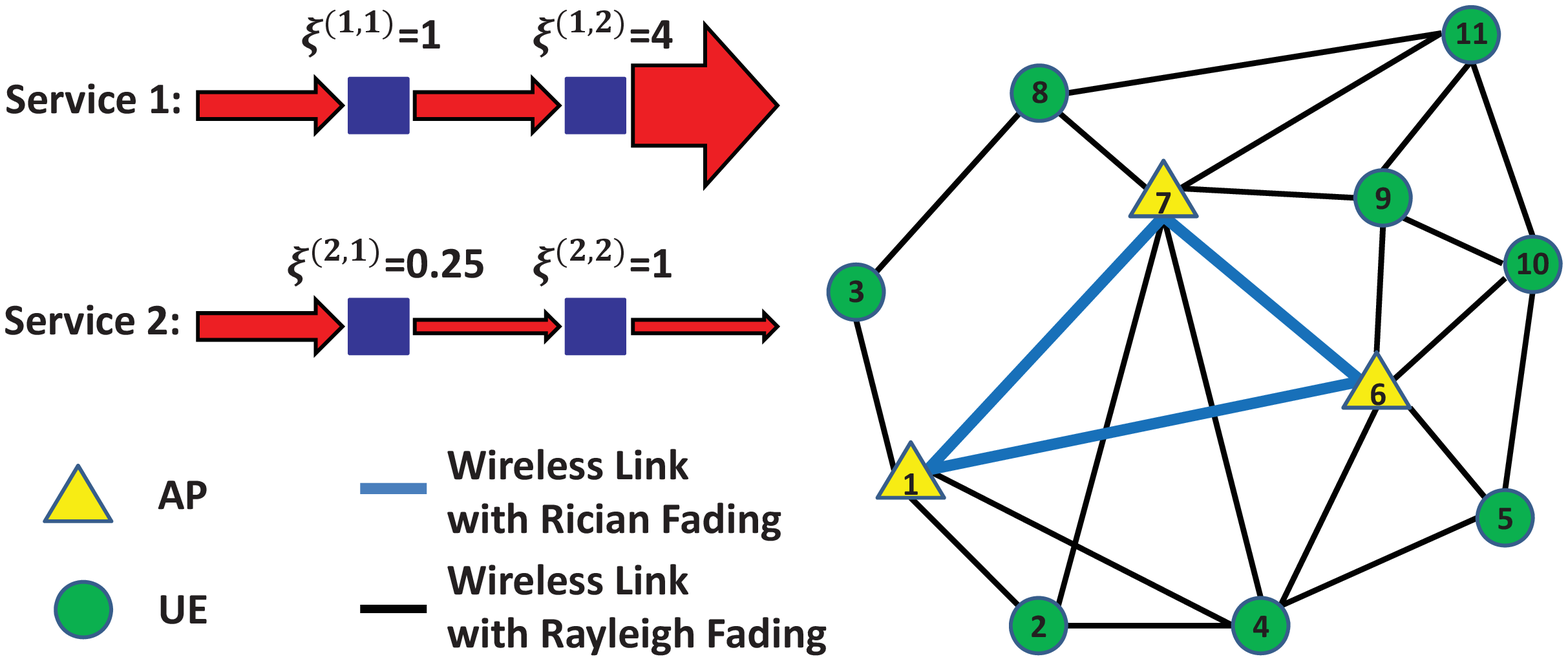}
\caption{A wireless computing network with 3 access points and 8 end user devices, providing 2 AgI services.}
\label{fig: simulation_network}
\end{figure}

\begin{table}
\centering
\caption{Computing Nodes' Locations}
\begin{tabular}{l| l l l l l l l l l l l}
  \hline
  Node Index &1&2&3&4&5&6 \\ \hline
  Location $(X,Y)$  & (0,10)& (10,0) & (-5, 20) & (22, 0) & (27, 5) & (24, 10) \\
  \hline\hline
  Node Index &7&8&9&10&11\\ \hline
  Location $(X,Y)$ & (13, 22) & (5, 30) & (27, 23) & (35, 21) & (30, 33)\\
  \hline
\end{tabular}
\label{tab: node coordinations}
\vspace{-0.5cm}
\end{table}

\subsection{Communication Setup}
\label{comm_setup}

We assume that when using the broadcast approach, nodes use three code layers with corresponding channel gain thresholds $[\bar g_{i,1}=-46.82, \bar g_{i,2}=-40.80, \bar g_{i,3}=-39.03]$ dB for Rician fading channels, and $[\bar g_{i,1}=-43.02, \bar g_{i,2}=-38.37, \bar g_{i,3}=-36.41]$ dB for Rayleigh fading channels.
Allocating the total power among the code layers in the ratio $[1:4:6]$ and $[1:2.9:4.6]$ of the total transmission power, 
respectively for Rician and Rayleigh channels,\footnote{The optimization of the power allocation among different code layers at each transmitting node is beyond the scope of this paper. In this paper, we treat the power values allocated among code layers as given parameters.} and using Eq. \eqref{eq_max_rate_per_layer}, we generate the maximum achievable transmission rates
$[\bar R_{i,1}^0=0, \bar R_{i,1}^1=12.1, \bar R_{i,1}^2=20.6, \bar R_{i,1}^3=48.3]$ for Rician  channels, and
$[\bar R_{i,1}^0=0, \bar R_{i,1}^1=7.8, \bar R_{i,1}^2=13.1, \bar R_{i,1}^3=27.8]$ for Rayleigh  channels.

To demonstrate the efficiency of adopting the broadcast approach, we also simulate the case of adopting the traditional \emph{outage approach} coding scheme. 
With the outage approach, the transmission rate is fixed, and the information is reliably decoded when the instantaneous channel gain exceeds a threshold, otherwise no information is decoded. For each node $i$ using the outage approach, we set $\bar g_i^{\text{out}}=-40.80$ dB and $-38.37$ dB as the outage thresholds 
for Rician 
and Rayleigh outgoing links, respectively, 
such that if $g_{ij}(t)\ge \bar g_i^{\text{out}}$ and $k=1$, the maximum achievable rate $R_{ij,1}(t)$ is equal to the outage rate denoted by $\bar R_{i,1}^{\text{out}}$; otherwise, $R_{ij,1}(t)$ is zero. Note that the values of $\bar g_{i}^{\text{out}}$ and $\bar R_{i,1}^{\text{out}}$ are respectively equal to the values of $\bar g_{i,2}$ and $\bar R_{i,1}^2$ in the broadcast approach.

\subsection{Service Setup}

The wireless computing network offers two services (see Fig. \ref{fig: simulation_network}), each of which consists of two functions. To indicate multiple services, we let $(\phi,m)$, $\phi=1,2$, $m=1,2$, denote the $m$-th function of service $\phi$; and $(d,\phi,m)$, $d\in \{1,\cdots,11\}$, $\phi =1,2$, $m=0,1$ denote the commodity processed by function $(\phi,m+1)$ for destination $d$.

All four functions have the same complexity factor, equal to $1$ (number of operations per unit flow). In terms of flow scaling, as is shown in Fig. \ref{fig: simulation_network}, functions $(1,1)$ and $(1,2)$ have scaling factors $1$ and $4$, respectively, \ie $\xi^{(1,1)}=1$ and $\xi^{(1,2)}=4$; and functions $(2,1)$ and $(2,2)$ have scaling factors $0.25$ and $1$, respectively, \ie $\xi^{(2,1)}=0.25$ and $\xi^{(1,2)}=1$. Note that function $(1,2)$ is an expansion function, while function $(2,1)$ is a compression function.

\subsection{Broadcast Approach v.s. Outage Approach}

We consider a scenario in which each service is requested by $56$ clients corresponding to all possible UE source-destination pairs.

\begin{figure*}[ht]
\centering
\subfigure[]{
\centering \includegraphics[width=3.1in]{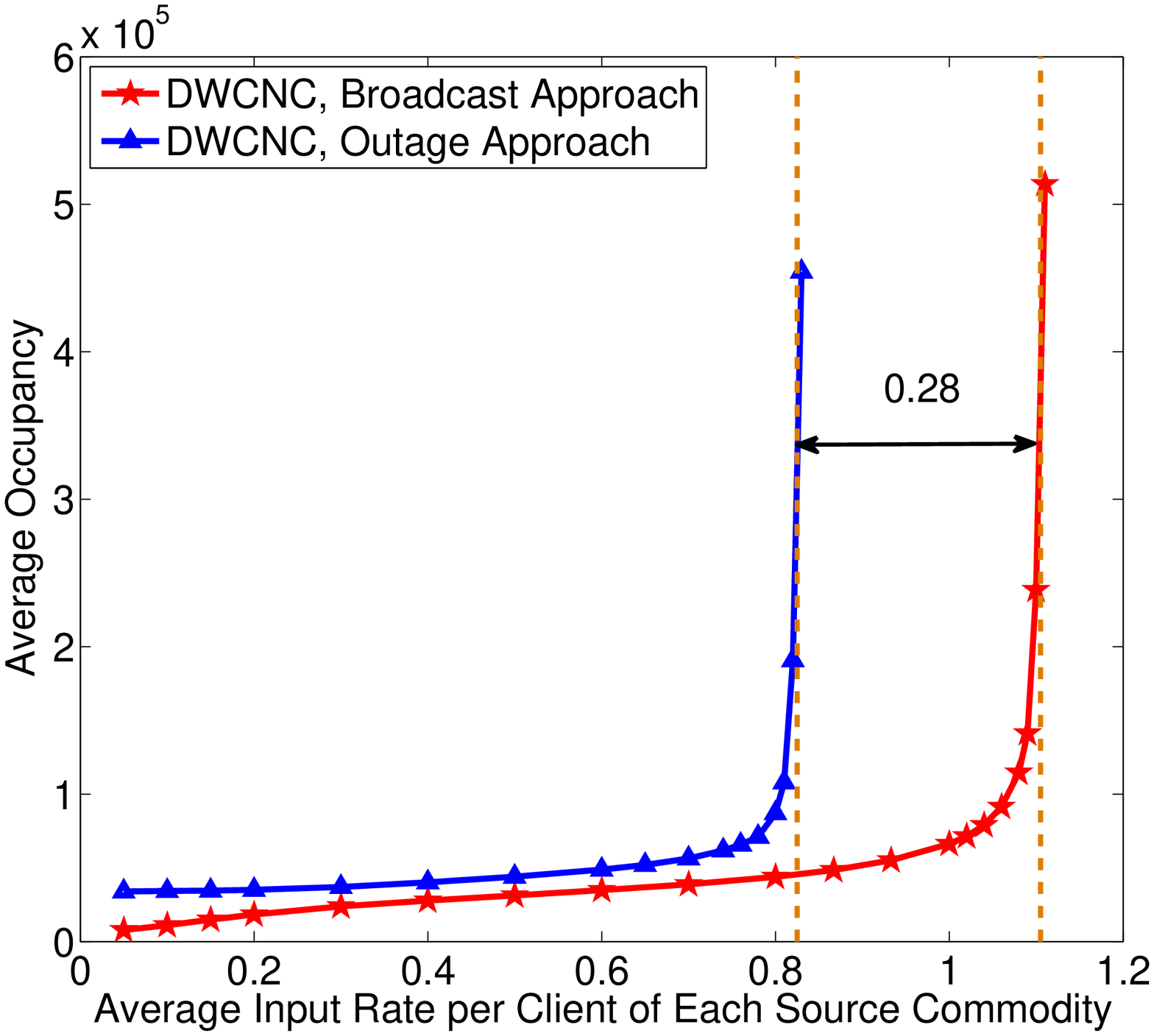}
\label{result3_1}
}
\subfigure[]{
\centering \includegraphics[width=3.1in]{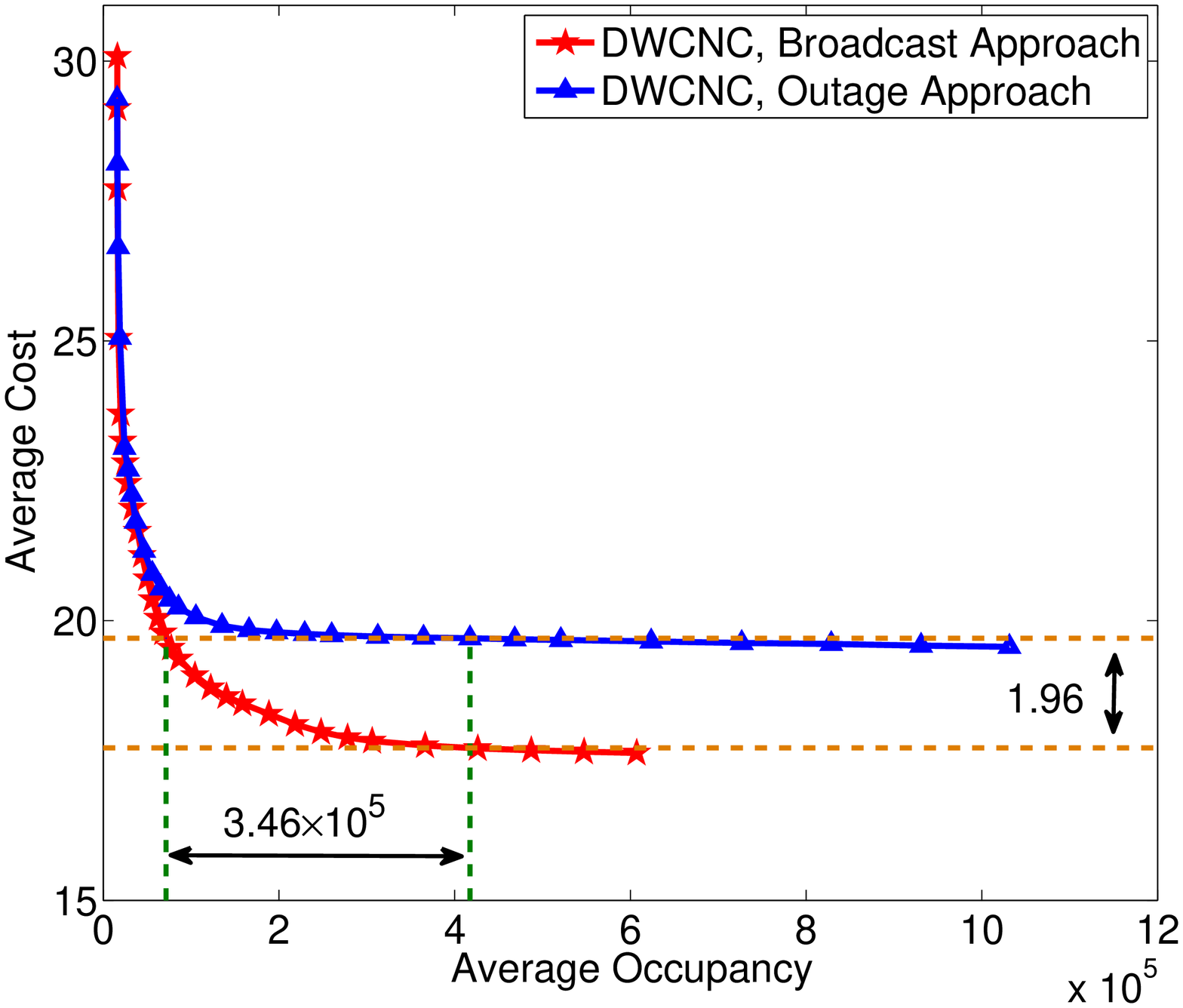}
\label{result3_2}
}

\caption{Performance of DWCNC with the broadcast approach and the outage approach in the large scale scenario: a) Average Cost vs. Average Occupancy  b) Average occupancies evolving with varying exogenous input rate: throughput optimality }
\vspace{-0.5cm}
\label{fig: result3}
\end{figure*}

The throughput performance of DWCNC with both the broadcast approach and the outage approach is shown in Fig. \ref{result3_1}, where we plot the time average occupancy as a function of the  average exogenous input rate (assumed to be the same for all source commodities), while setting the control parameter $V$ equal to $500$.
Observe how the average occupancy exhibits a sharp increase when the exogenous input rate reaches approximately $0.83$ and $1.11$, for the outage and broadcast approach, respectively. According to Theorem \ref{thm: network_capacity_region2}, and considering $\epsilon\rightarrow 0$, this sharp increase indicates that the average input rate has reached the boundary of the computing network capacity region, and hence it is indicative of the maximum achievable throughput. It can be seen from Fig. \ref{result3_1} that the maximum throughput using the broadcast approach is larger than that of using the the outage approach. This significant throughput difference is a clear indication the enhanced transmission ability of the broadcast approach.

In the following, we assume an average input rate of $0.7$, which is interior to the capacity region of DWCNC with both the outage and the broadcast approach.

Fig. \ref{result3_2} shows the tradeoff between the average cost and the average occupancy (total queue backlog) as the control parameter $V$ varies between $0$ and $10^4$, when running DWCNC with the broadcast approach and the outage approach.
It can be seen from Fig. \ref{result3_2} that, with either coding scheme, the average cost decreases with the increase of the average occupancy. In general, both evolutions follow the $[O(1/V),O(V)]$ cost-delay tradeoff of Theorem \ref{thm: network_capacity_region2}. However, the corresponding trade-off ratios are different.
Comparing the two curves in Fig. \ref{result3_2}, it can be seen that the broadcast approach exhibits a better cost-delay tradeoff, in the sense that for a given target cost it can achieve a lower occupancy, and viceversa.
For example, if we fix the average cost to $19.69$, the outage approach requires an average occupancy of $4.18\times10^5$, while the broadcast approach can achieve that same average cost with an average occupancy of $7.16\times10^4$, leading to a factor of $5.8\times$ reduction in average delay.
On the other hand, when fixing the average occupancy to be \eg $4.18\times10^5$, the outage approach requires and average cost of $19.69$, while the broadcast approach can reduce the cost to $17.73$ for the same average occupancy.

\subsection{Processing Flow Distribution}

In this section, we simulate the average processing input rate distribution for the $4$ functions and $56$ clients across the computing network nodes under DWCNC with both the outage approach and the broadcast approach, respectively shown in Fig. \ref{fig: result 4} and Fig. \ref{fig: result 5}. The average input rate for each client and each service is again equal to $0.7$ and we set the control parameter $V$ to $10^4$.

Observe from Figs. \ref{result4_1} and \ref{result5_1} that the implementation of function $(1,1)$ mostly concentrates at the APs (nodes $1,6,7$), motivated by the fact that the APs have cheaper processing resources than the UEs. Note, however, that part of the processing of function $(1,1)$ still takes place at the UEs, even though the APs still have available processing capacity. This results from the fact that, for certain clients $s \rightarrow d$, there exist short paths connecting node $s$ and $d$ not passing through any AP, such that commodity $(d,1,0)$ steadily gets routed along these paths and gets processed at the corresponding UEs, instead of getting routed along longer paths that pass through APs. 
Comparing Fig. \ref{result4_1} and Fig. \ref{result5_1}, it can be seen that the implementation of function $(1,1)$ concentrates even more at the APs when using the broadcast approach. This is  due to the enhanced transmission ability of the broadcast approach, which lowers the cost of taking longer paths passing through APs. 

\begin{figure*}[ht]
\centering
\begin{minipage}[t]{0.08\textwidth}
\centering \includegraphics[width=1.42in]{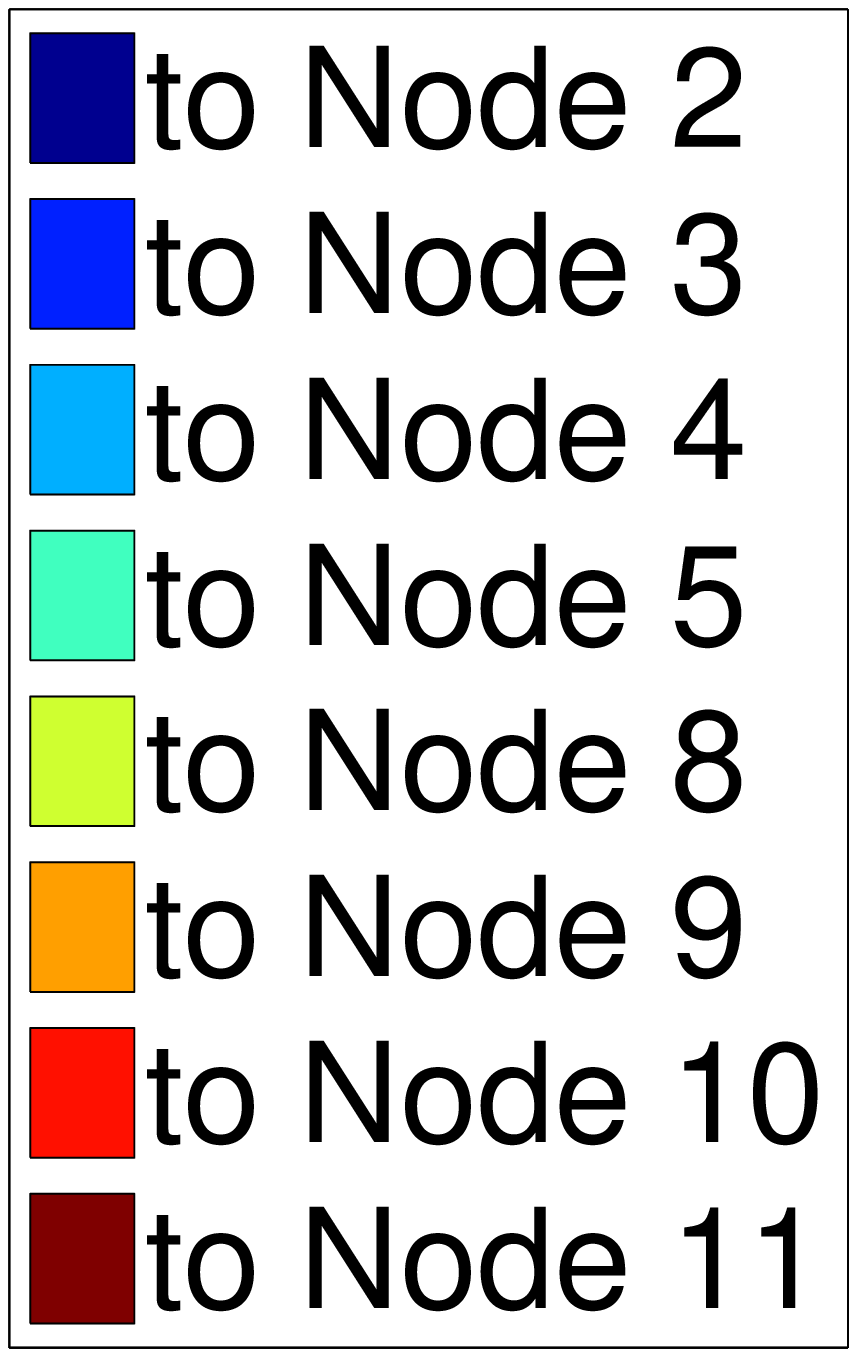}
\end{minipage}
\subfigure[]{
\centering \includegraphics[width=1.53in]{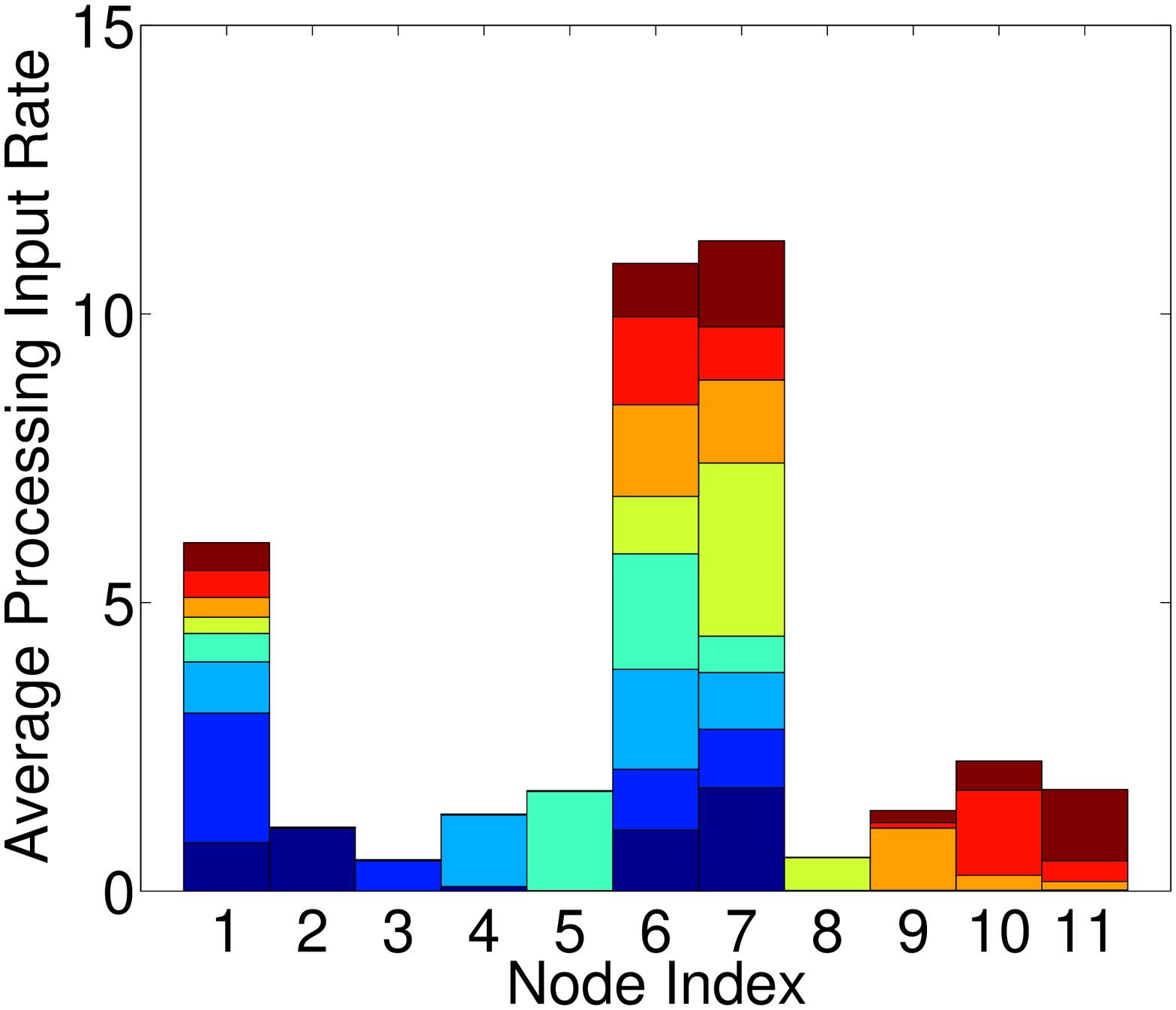}
\label{result4_1}
}
\hspace{-0.9cm}
\subfigure[]{
\centering \includegraphics[width=1.53in]{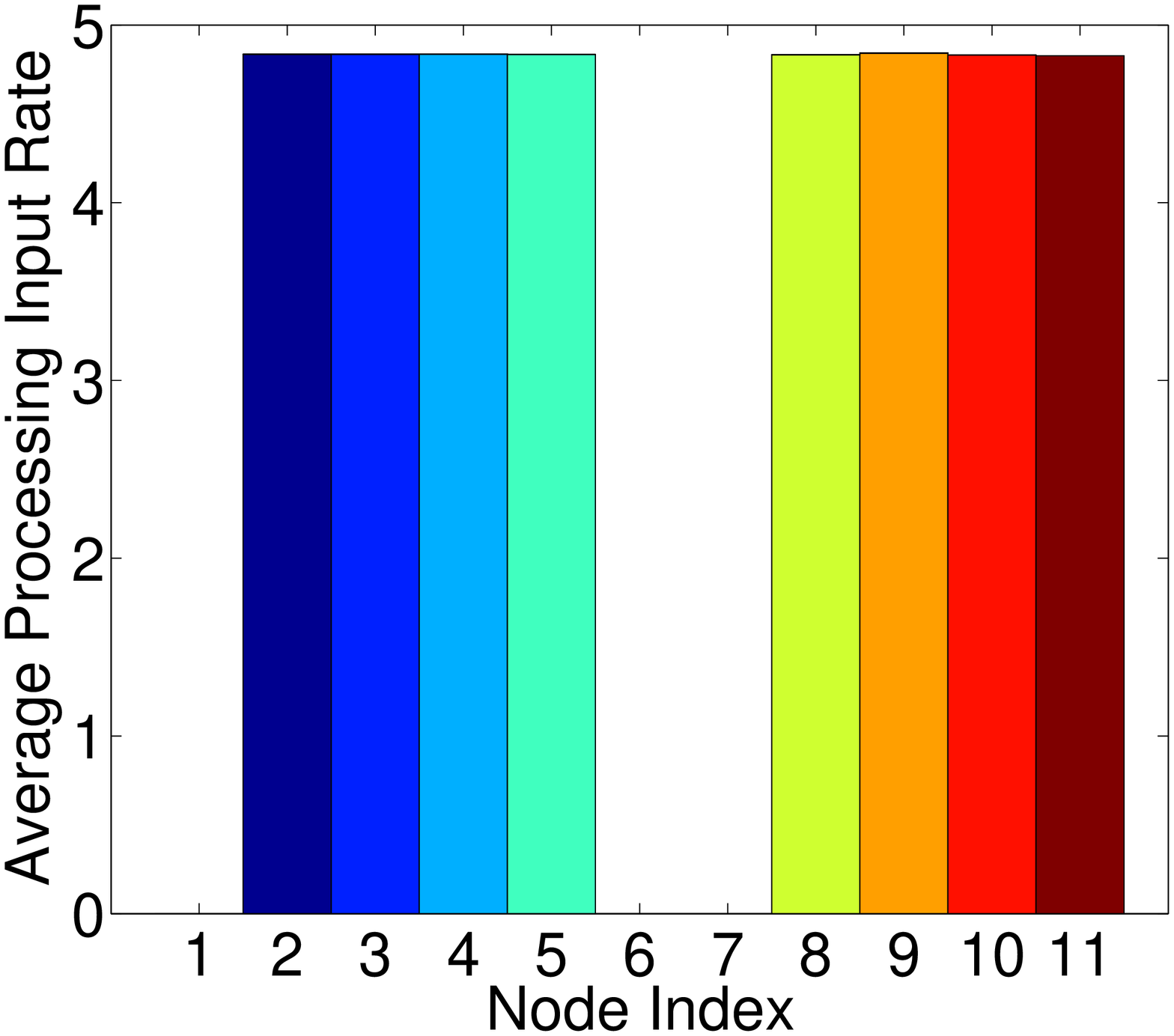}
\label{result4_2}
}
\hspace{-0.8cm}
\subfigure[]{
\centering \includegraphics[width=1.53in]{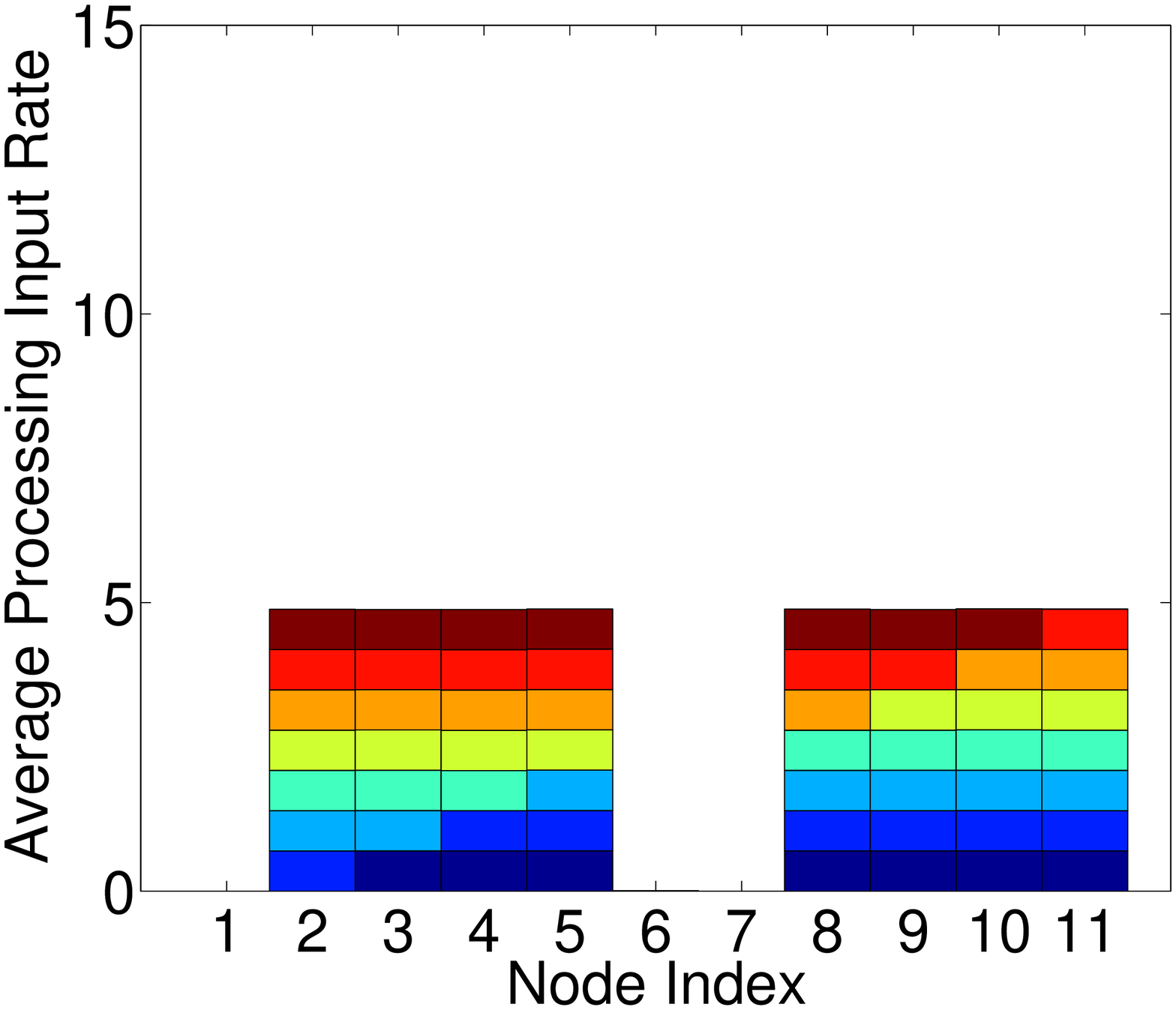}
\label{result4_3}
}
\hspace{-0.9cm}
\subfigure[]{
\centering \includegraphics[width=1.53in]{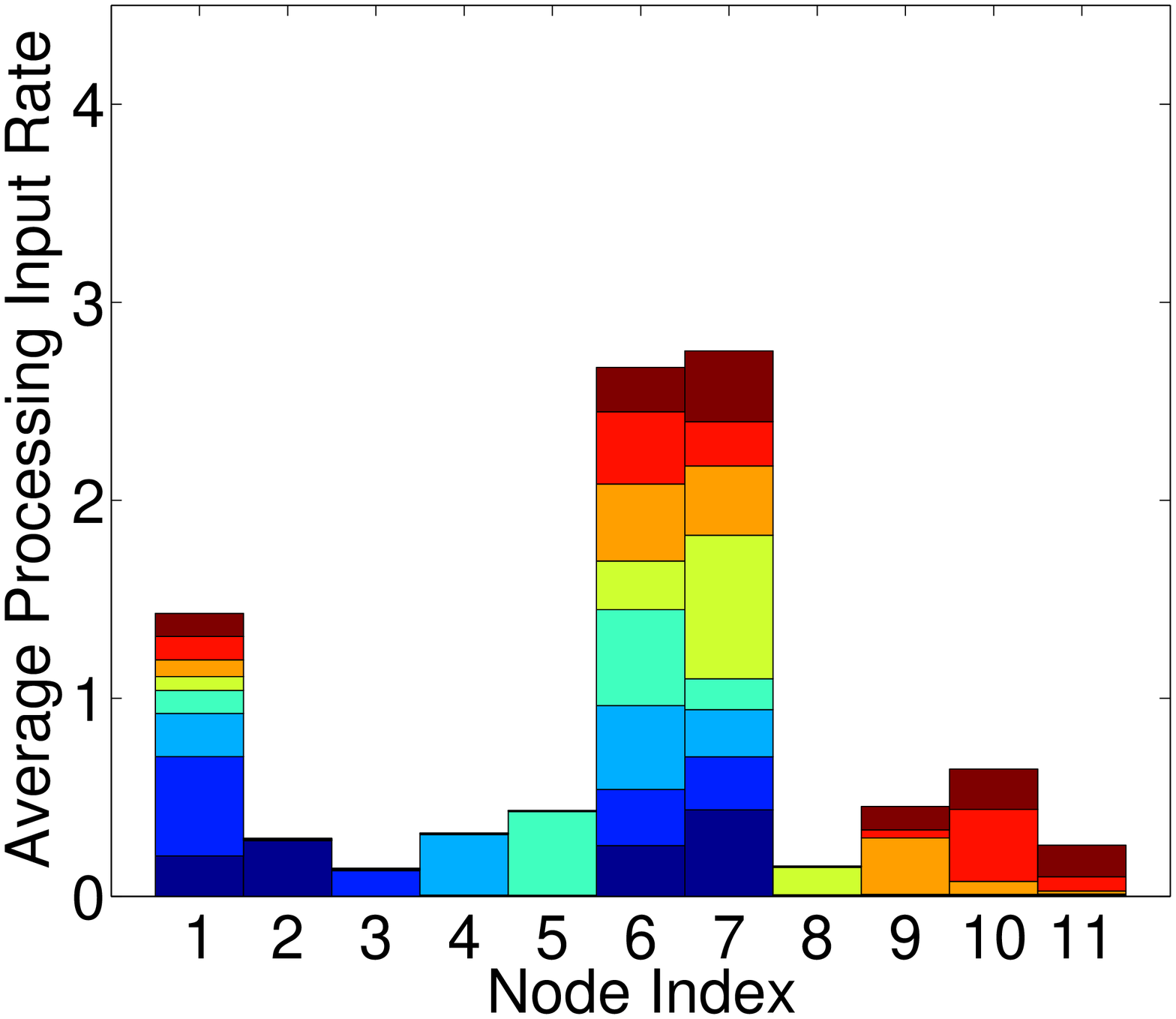}
\label{result4_4}
}

\caption{Average processing input rate distribution of DWCNC with the outage approach. a)~Service $1$, Function $1$; b)~Service $1$, Function $2$; c)~Service $2$, Function $1$; d)~Service $2$, Function $2$.}
\vspace{-0.5cm}
\label{fig: result 4}
\end{figure*}

\begin{figure*}[ht]
\centering
\begin{minipage}[t]{0.08\textwidth}
\centering \includegraphics[width=1.42in]{large_legend_distribution.eps}
\end{minipage}
\subfigure[]{
\centering \includegraphics[width=1.53in]{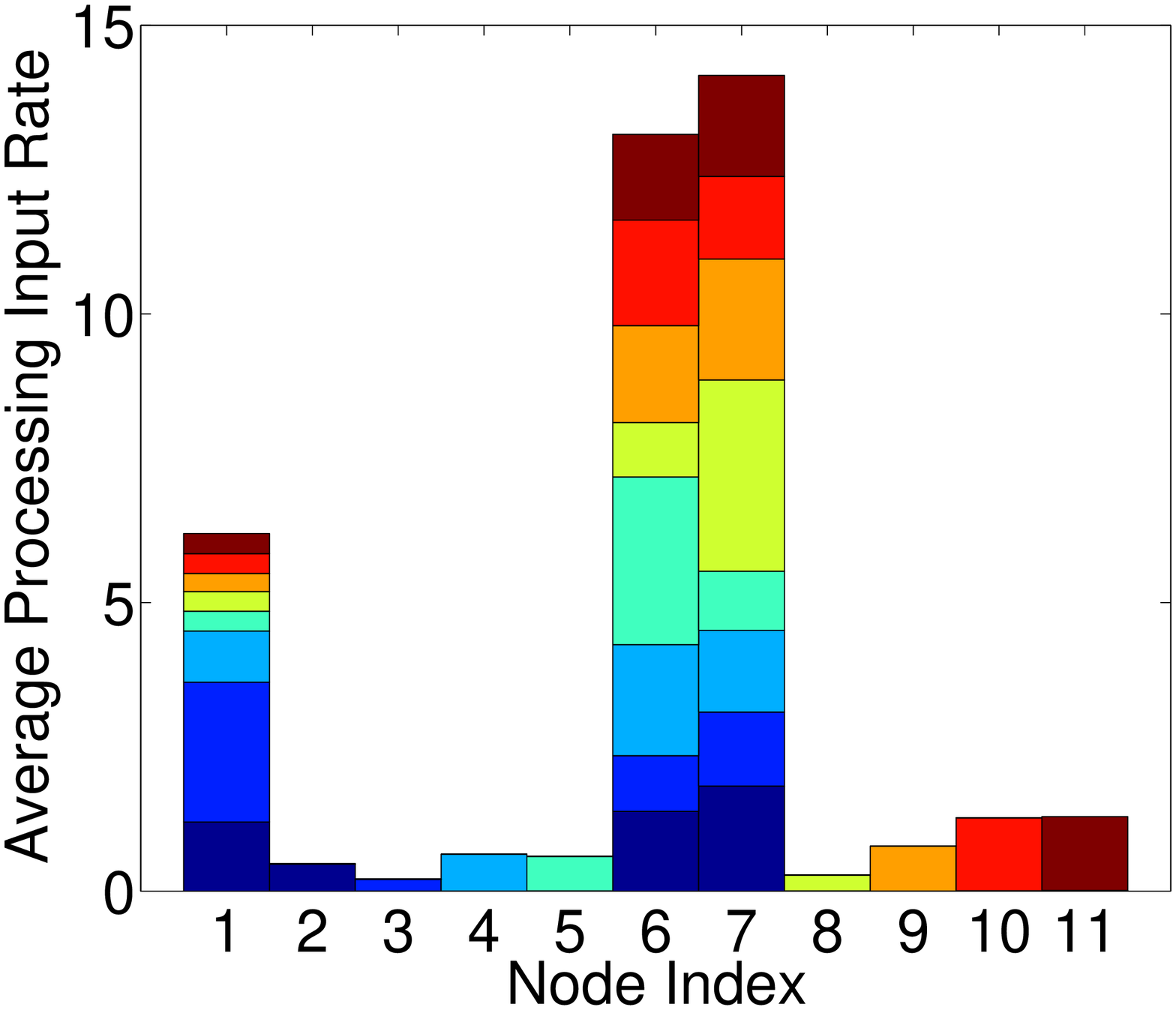}
\label{result5_1}
}
\hspace{-0.9cm}
\subfigure[]{
\centering \includegraphics[width=1.53in]{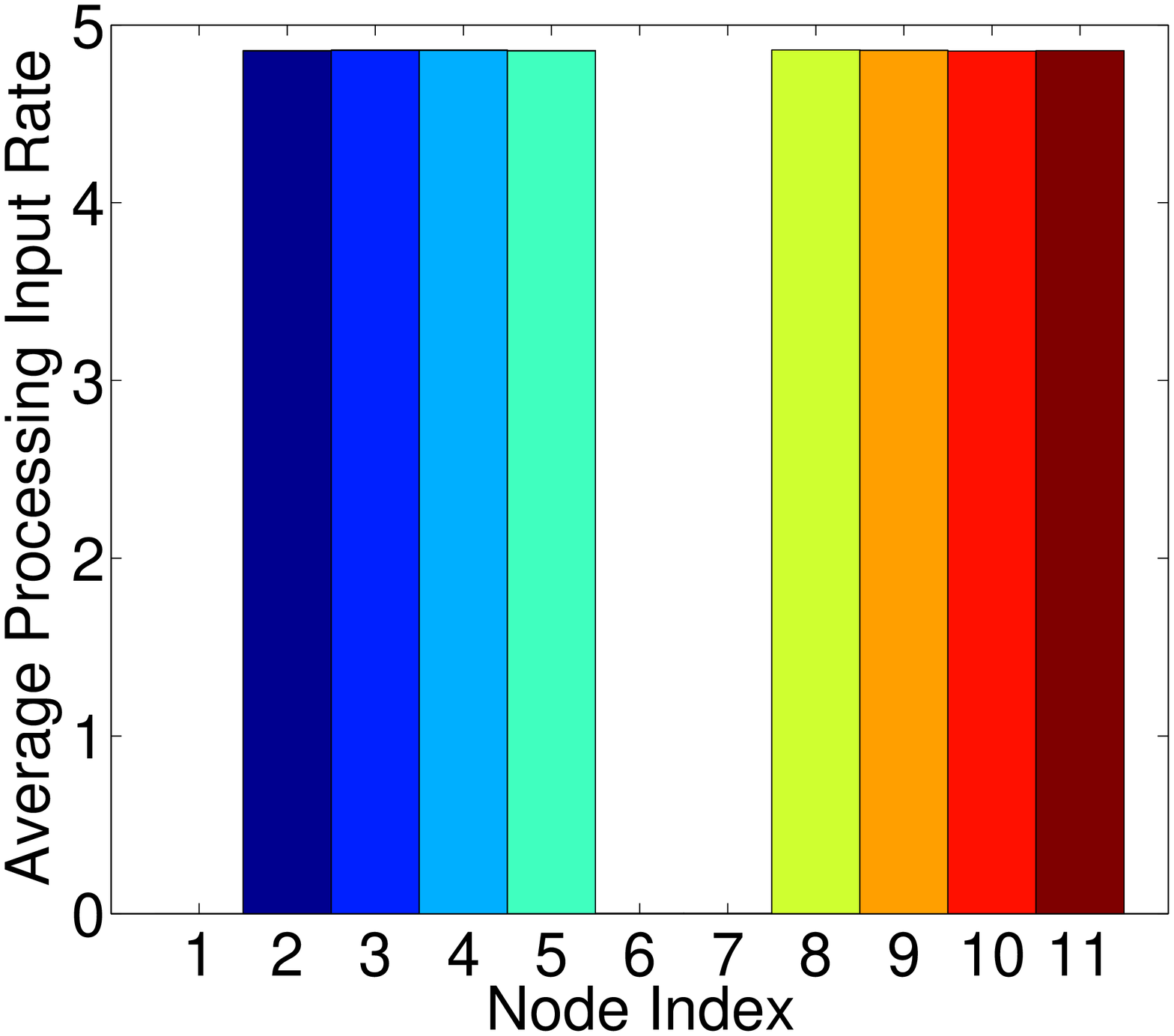}
\label{result5_2}
}
\hspace{-0.8cm}
\subfigure[]{
\centering \includegraphics[width=1.53in]{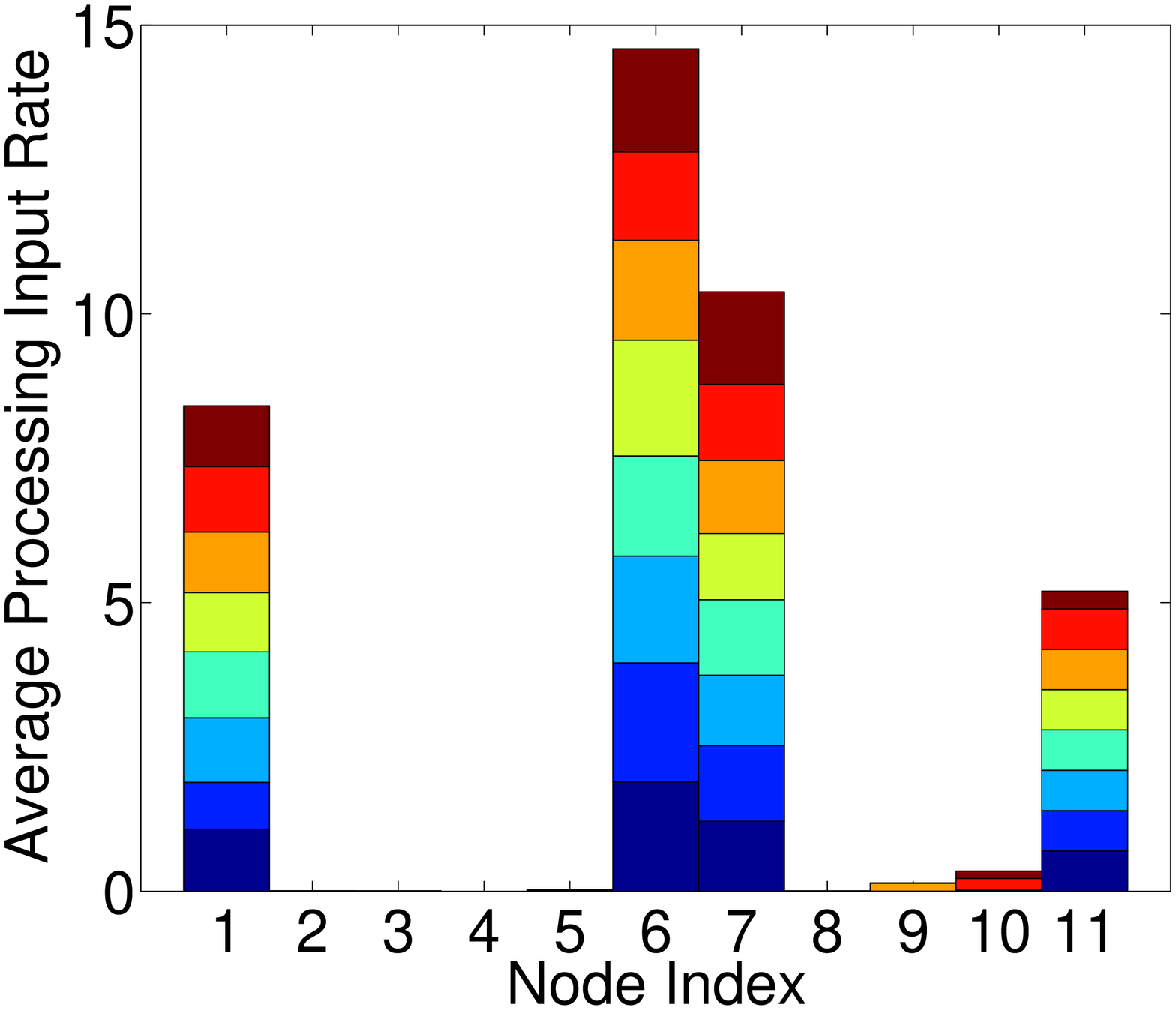}
\label{result5_3}
}
\hspace{-0.9cm}
\subfigure[]{
\centering \includegraphics[width=1.53in]{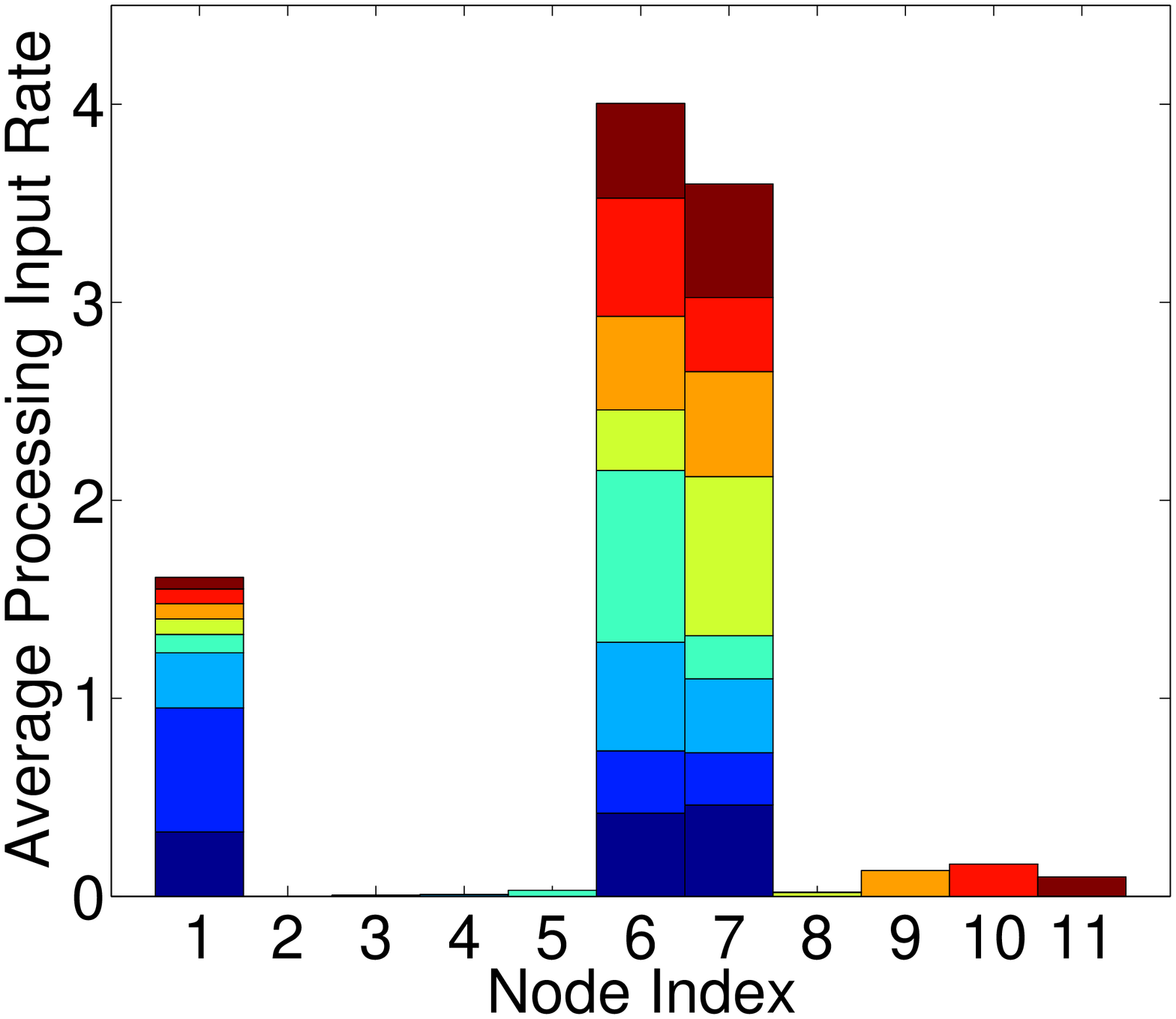}
\label{result5_4}
}

\caption{Average processing input rate distribution of DWCNC with the broadcast approach. a)~Service $1$, Function $1$; b)~Service $1$, Function $2$; c)~Service $2$, Function $1$; d)~Service $2$, Function $2$. }
\vspace{-0.5cm}
\label{fig: result 5}
\end{figure*}

Figs. \ref{result4_2} and \ref{result5_2} show the average processing input rate distribution of function $(1,2)$, which is an expansion function. As expected, the processing of  input commodity $(d,1,1)$ concentrates at its destination node $d$ when using both the broadcast and the outage approach. This results from DWCNC trying to minimize the transmission cost impact of the expanded-size commodities generated by function $(1,2)$.

For Service $2$, observe that the average processing input rate distribution of function $(2,1)$ is quite different depending on the coding scheme used, as illustrated in Figs. \ref{result4_3} and \ref{result5_3}. With the outage approach, Fig. \ref{result4_3} shows that function $(2,1)$, a compression function, is implemented at all the UEs except the destination node $d$, and at the APs. This is because, for each client $s\rightarrow d$, implementing function $(2,1)$ at the source node $s$ reduces the transmission cost of service $2$ by compressing the source commodity $(d,2,0)$ before entering the network. In contrast, as shown in Fig. \ref{result5_3}, the implementation of function $(2,1)$ using the broadcast approach mostly concentrates at the APs. This is again due to the increased transmission efficiency of the broadcast approach, which allows to push the processing of commodity $(d,2,0)$ to the cheaper APs with a smaller penalty in the transmission cost required to route the uncompressed commodity. 

The processing distribution of function $(2,2)$, shown in Figs. \ref{result4_4} and \ref{result5_4}, display a similar behavior as that of function $(1,1)$ shown in Figs. \ref{result4_1} and \ref{result5_1}. 
Note how, again, 
the processing distribution concentrates more on the APs when adopting the broadcast, illustrating, once more, how its enhanced transmission efficiency allows a better utilization of the cheaper processing nodes.

\section{Conclusion}
\label{conclusion}
We considered the problem of optimal distribution of augmented information services over wireless computing networks.
We characterized the capacity region of a wireless computing network
and designed
a dynamic wireless computing network control (DWCNC) algorithm that drives local transmissions-plus-processing flow scheduling and resource allocation decisions, shown to achieve arbitrarily close to minimum average network cost, while subject to network delay increase with the general trade off order $[O(1/V ), O(V )]$.
Our solution captures the unique chaining and flow scaling aspects of AgI services, while exploiting the use of the broadcast approach coding scheme over the wireless channel.



%

\appendices
\appendices
\section{Proof of Theorem \ref{thm: network_capacity_region}:necessity}
\label{appendix:_proof_of_theorem_1}
\subsection{Proof of Necessity}
We prove that \eqref{eq_thm1_stability}-\eqref{etas} are necessary for the stability of the wireless computing network, and that the minimum average cost can be achieved according to \eqref{eq_thm1_minimum_cost} and \eqref{eq:underline}.



Recall that our policy space includes policies that use multi-copy routing, which allow multiple copies of the same information unit to travel through the network. We say that two information units are {\em equivalent} if the successful delivery of one of them to its destination does not require the delivery of the other to satisfy the service demand. Note that equivalent information units may be exact copies of each other, but may also be distinct units that have evolved via service processing from a common copy. 

Let us assume that when an information unit of final commodity $(d,M)$ gets delivered to destination $d$, all other {\em equivalent} information units are immediately discarded from the network -- an ideal assumption for traffic reduction of algorithms with multi-copy routing.
We define $\mathcal I^{(d,m)}(t)$ as the set of information units of commodity $(d,m)$ that, after going through the sequence of service functions $\{m+1,m+2,\dots,M\}$,
are delivered to destination $d$ within the first $t$ timeslots.
Suppose there exists an algorithm that stabilizes the wireless computing network, possibly allowing multiple copies of a given information unit to flow through the network. Under this algorithm, define
\begin{itemize}
\item $I_i^{(d,m)}(t)$: the number of information units within $\mathcal I^{(d,m)}(t)$ 
that exogenously enter node $i$; 
\item $I_{i,\text{pr}}^{\left( {d,m} \right)}(t)$ and $I_{\text{pr},i}^{\left( {d,m} \right)}(t)$: the number of information units within $\mathcal I^{(d,m)}(t)$ 
that enter/exit the processing unit of node $i$; 
\item $I_{ij}^{\left( {d,m} \right)}(t)$: the number of times the information units within $\mathcal I^{(d,m)}(t)$ flow over link $(i, j)$. 
\end{itemize}

Since the algorithm stabilizes the network, we have, with probability 1,
\begin{equation}
\lim_{t \rightarrow \infty} \frac {\sum_{\tau=0}^t a_i^{(d,m)}(t)} {t} =  \lim_{t \rightarrow \infty}  \frac {I_i^{(d,m)}(t)}
{t} =\lambda_i^{(d,m)},\qquad \forall i,(d,m).
\label{eq_rate_stable_a_X}
\end{equation}


Moreover, the total number of arrivals (both exogenously and endogenously) to node $i$ of information units within $\mathcal I^{(d,m)}(t)$ must be equal to the number of departures from node $i$ of information units within $\mathcal I^{(d,m)}(t)$. Therefore, we have, for $i\neq d$ or $m<M$,
\begin{align}
\label{eq_packets_flow_balance}
&\sum\nolimits_{j:j\neq i} {I_{ji}^{\left( {d,m} \right)}} (t) + I_{\text{pr},i}^{\left( {d,m} \right)}(t) + I_i^{(d,m)}(t)  =\sum\nolimits_{j:j\neq i} {I_{ij}^{\left( {d,m} \right)}} (t) + I_{i,\text{pr}}^{\left( {d,m} \right)}(t),
\end{align}
and, for $m<M$ and for all $i$ and $d$,
\begin{equation}
\label{packet_flow_chaining}
I_{\text{pr},i}^{\left( {d,m+1} \right)}(t)=\xi^{(d,m+1)}I_{i,\text{pr}}^{\left( {d,m} \right)}(t). 
\end{equation}

Furthermore, on the one hand, define the following variables for transmission:
\begin{itemize}
\item $T ({\bf s},t)$: the number of timeslots within the first $t$ timeslots in which the network state is $\bf s$
\item $\tilde \alpha_{i,k}^{\text{tr}}({\tilde{\bf s}},t)$: the number of timeslots within the first $t$ timeslots in which $k$ transmission resource units are allocated at node $i$, while the previous network state (CSI feedback in the previous timeslot) is $\tilde{\bf s}$
\item $\tilde \beta_{i,\text{tr}}^{(d,m)}({\tilde{\bf s}},k,t)$: the accumulated time (in possibly fractional timeslots) during the first $t$ timeslots used by node $i$ to transmit information units within $\mathcal I^{(d,m)}(t)$, while $k$ resource units are allocated for transmission, and the previous network state is $\tilde{\bf s}$
\item $\rho_{i,{\bf s}}^{(d,m)}({\tilde{\bf s}},k,t)$: the accumulated time during the first $t$ timeslots used by node $i$ to transmit information units within $\mathcal I^{(d,m)}(t)$ when the network state is $\bf s$, while the network state in the previous timeslot is $\tilde {\bf s}$, and $k$ resource units are allocated for transmission
\item $\gamma_{i,n, \text{tr}}^{(d,m)}({\tilde{\bf s}},k, {\bf s}, t)$: the number of times during the first $t$ timeslots that an information unit within $\mathcal I^{(d,m)}(t)$ is transmitted by node $i$ with $k$ transmission resource units, and fall into the $n$-th partition, while the network state is $\bf s$ and the previous network state is $\tilde{\bf s}$
\item $\tilde \eta_{ij}^{(d,m)}({\tilde{\bf s}},k,{\bf s},n,t)$: the number of times during the first $t$ timeslots that an information unit within $\mathcal I^{(d,m)}(t)$ transmitted by node $i$ with $k$ transmission resource units, is retained by node $j$ while belonging to the $n$-th partition, when the network state is $\bf s$ and the previous network state is $\tilde{\bf s}$
\end{itemize}



Based on the above definitions and the transmission constraints, we have the following relations:
\begin{align}
\label{eq_bound_alpha_tr}
&\frac{{\tilde \alpha _{i,k}^{\text{tr}}({{\tilde {\bf s}}},t)}}{{{T_{\tilde{\bf s}}}\left( t \right)}} \ge 0,\ \ \sum\nolimits_{k = 0}^{K_i^{tr}} {\frac{{\tilde \alpha _{i,k}^{\text{tr}}({\tilde{\bf s}},t)}}{{{T_{\tilde {\bf s}}}\left( t \right)}}}  = 1,\qquad \forall i,\tilde {\bf s},t, \\
\label{eq_bound_beta_tr}
&\frac{{\tilde \beta _{i,\text{tr}}^{\left( {d,m} \right)}\left( {{\tilde{\bf s}},k,t} \right)}}{{\tilde \alpha _{i,k}^{\text{tr}}\left( {{\tilde{\bf s}},t} \right)}} \ge 0,\ \sum\nolimits_{\left( {d,m} \right)} {\frac{{\tilde \beta _{i,\text{tr}}^{\left( {d,m} \right)}\left( {{\tilde{\bf s}},k,t} \right)}}{{\tilde \alpha _{i,k}^{\text{tr}}\left( {{\tilde{\bf s}},t} \right)}} \le 1},\qquad \forall i,k,\tilde {\bf s},t, \\
\label{eq_bound_eta}
&\frac{{\tilde \eta _{ij}^{(d ,m)}({\tilde{\bf s}},k,{\bf s}, n,t)}}{{\gamma _{i,n,\text{tr}}^{(d,m)}({\tilde{\bf s}},k,{\bf s},t)}} \ge 0,\ \sum\nolimits_{j \in {\Omega _{i,n}}} {\frac{{\tilde \eta _{ij}^{(d ,m)}({\tilde{\bf s}},k,{\bf s}, n,t)}}{{\gamma _{i,n,\text{tr}}^{(d,m)}({\tilde{\bf s}},k,{\bf s}, t)}} \le 1},\qquad \forall i,d,m,\tilde {\bf s},{\bf s},t,
\end{align}
where we define $0/0=1$ for any term on the denominator happen to be zero. For each link $(i,j)$, each commodity $(d,m)$, and all $t$, we then have
\begin{align}
\label{eq_time_average_rate_tr}
\frac{{I_{ij}^{(d,m)}\left( t \right)}}{t} = &\sum\nolimits_{{\bf{\tilde s}} \in \mathcal S} {\frac{{{T_{{\bf{\tilde s}}}}\left( t \right)}}{t}\sum\nolimits_{k = 0}^{K_i^{\text{tr}}} {\frac{{\tilde \alpha _{i,k}^{\text{tr}}({\bf{\tilde s}},t)}}{{{T_{{\bf{\tilde s}}}}\left( t \right)}}\frac{{\tilde \beta _{i,\text{tr}}^{\left( {d,m} \right)}\left( {{\bf{\tilde s}},k,t} \right)}}{{\tilde \alpha _{i,k}^{\text{tr}}({\bf{\tilde s}},t)}}\sum\limits_{{\bf{s}} \in \mathcal S} {\frac{{\rho _{i,{\bf{s}}}^{\left( {d,m} \right)}\left( {{\bf{\tilde s}},k,t} \right)}}{{\tilde \beta _{i,\text{tr}}^{\left( {d,m} \right)}\left( {{\bf{\tilde s}},k,t} \right)}}} } } \notag\\
&\times \sum\nolimits_{n = 1}^{g_{i,{\bf{s}}}^{ - 1}\left( j \right)} {\frac{{\gamma _{i,n,\text{tr}}^{(d,m)}({\bf{\tilde s}},k,{\bf{s}},t)}}{{\rho _{i,{\bf{s}}}^{\left( {d,m} \right)}\left( {{\bf{\tilde s}},k,t} \right)}}} \frac{{\tilde \eta _{ij}^{(d ,m)}({\tilde {\bf{s}}},k,{\bf{s}},n,t)}}{{\gamma _{i,n,\text{tr}}^{(d,m)}({\bf{\tilde s}},k,{\bf{s}},t)}}.
\end{align}
The network state process yields,
\begin{equation}
\label{eq_stationary_prob}
\mathop {\lim }\limits_{t \to \infty } \frac{{{T_{\bf{s}}}\left( t \right)}}{t} = {\pi _{\bf{s}}},\ \mathrm{with\ prob.}\ 1,
\end{equation}
and due to fact that $y_{i,k}^{\rm{tr}}(\tau)$ is independent of ${\bf S}(\tau)$ given ${\bf S}(\tau-1)=\tilde {\bf s}$, we also have
\begin{equation}
\mathop {\lim }\limits_{t \to \infty } \frac{{\rho _{i,{\bf{s}}}^{\left( {d,m} \right)}\left( {{\bf{\tilde s}},k,t} \right)}}{{\tilde \beta _{i,\text{tr}}^{\left( {d,m} \right)}\left( {{\bf{\tilde s}},k,t} \right)}} = {P_{{\bf{\tilde ss}}}},\qquad \forall i,d,m,
\end{equation}
where $P_{{\bf{\tilde ss}}}\triangleq \Pr({\bf S}(t)={\bf s}|{\bf S}(t-1)=\tilde {\bf s})$.
In addition, we upper bound the average rate of the $n$-th partition as follows:
\begin{align}
\label{eq_cap_achieving_rate_tr}
0\!\le\frac{{\gamma _{i,n,\text{tr}}^{(d,m)}\!({\bf{\tilde s}},k,{\bf{s}},t)}}{{\rho _{i,{\bf{s}}}^{\left( {d,m} \right)}\left( {{\bf{\tilde s}},k,t} \right)}}\! \le\! {R_{i,{g_{i,n}},k}}\!\left( {\bf{s}} \right)\! -\! {R_{i,{g_{i,n - 1}},k}}\!\left( {\bf{s}} \right)\!, \qquad \forall i,k,d,m,\tilde {\bf s}, {\bf s},t.
\end{align}

On the other hand, define the following variables for processing:
\begin{itemize}
\item $\alpha_{i,k}^{\text{pr}}(t)$: the number of timeslots during the first $t$ timeslots in which node $i$ allocates $k$ processing resource units
\item $\beta_{i,\text{pr}}^{(d,m)}(k,t)$: the accumulated time used by node $i$ to process information units within $\mathcal I^{(d,m)}(t)$, while $k$ resource units are allocated for processing
\item $\gamma_{i,\text{pr}}^{(d,m)}(k,t)$: the number of information units within $\mathcal I^{(d,m)}(t)$ that are processed by node $i$ with $k$ processing resource units during the first $t$ timeslots
\end{itemize}
Based on the above definitions and the processing constraints, we have the following relations:
\begin{align}
\label{eq_bound_alpha_pr}
&\frac{{\alpha _{i,k}^{\text{pr}}(t)}}{t} \ge 0,\;\;\sum\nolimits_{k = 0}^{K_i^{pr}} {\frac{{\alpha _{i,k}^{\text{pr}}(t)}}{t}}  = 1,\qquad \forall i,t,\\
\label{eq_bound_beta_pr}
&\frac{{\beta _{i,\text{pr}}^{\left( {d,m} \right)}\left( {k,t} \right)}}{{\alpha _{i,k}^{\text{pr}}\left( t \right)}} \ge 0,\ \sum\nolimits_{\left( {d,m} \right)} {\frac{{\beta _{i,\text{pr}}^{\left( {d,m} \right)}\left( {k,t} \right)}}{{\alpha _{i,k}^{\text{pr}}\left( t \right)}} \le 1},\qquad \forall i,k,t,\\
\label{eq_bound_processing_capacity}
&0\le \frac{{\gamma _{i,\text{pr}}^{(d,m)}(k,t)}}{{\beta _{i,\text{pr}}^{\left( {d,m} \right)}\left( {k,t} \right)}} \le \frac{{R_{i,k}}}{r^{(d,m+1)}},\qquad \forall i,t,k,d,m<M.
\end{align}
For each node $i$, we then have, for all $i$, $(d,m)$ and $t$,
\begin{equation}
\label{eq_time_average_rate_pr}
\frac{{I_{i,\text{pr}}^{(d,m)}\left( t \right)}}{t} = \sum\nolimits_{k = 0}^{K_i^{\text{pr}}} {\frac{{\alpha _{i,k}^{\text{pr}}(t)}}{t}} \frac{{\beta _{i,\text{pr}}^{\left( {d,m} \right)}\left( {k,t} \right)}}{{\alpha _{i,k}^{\text{pr}}(t)}}\frac{{\gamma _{i,\text{pr}}^{(d,m)}(k,t)}}{{\beta _{i,\text{pr}}^{\left( {d,m} \right)}\left( {k,t} \right)}}.
\end{equation}

Because the constraints in \eqref{eq_bound_alpha_tr}-\eqref{eq_bound_eta}, \eqref{eq_cap_achieving_rate_tr}, and \eqref{eq_bound_alpha_pr}-\eqref{eq_bound_processing_capacity} define bounded ratio sequences with finite dimensions, there exists an infinitely long subsequence of timeslots $\{t_u\}$ over which the time average cost achieves its $\liminf$ value $\underline h$, and the ratio terms converge:
\begin{align}
&\Scale[1]{\mathop {\lim }\limits_{{t_u} \to \infty } \frac{1}{{{t_u}}}\sum\nolimits_{\tau  = 0}^{{t_u} - 1} {h\left( \tau  \right)}  = \underline h,\ \mathop {\lim }\limits_{{t_u} \to \infty } \frac{{\tilde \alpha _{i,k}^{\text{tr}}({\bf{\tilde s}},{t_u})}}{{{T_{{\bf{\tilde s}}}}\left( {{t_u}} \right)}} = \alpha _{i,k}^{\text{tr}}({\bf{\tilde s}}),\ \mathop {\lim }\limits_{{t_u} \to \infty } \frac{{\tilde \beta _{i,\text{tr}}^{\left( {d,m} \right)}\left( {{\bf{\tilde s}},k,t_u} \right)}}{{\tilde \alpha _{i,k}^{\text{tr}}({\bf{\tilde s}},t_u)}} = \tilde \beta _{i,\text{tr}}^{\left( {d,m} \right)}\left( {{\bf{\tilde s}},k} \right),}\notag\\
&\Scale[1]{\mathop {\lim }\limits_{{t_u} \to \infty }\! \frac{{\tilde \eta _{ij}^{(d,\phi ,m)}({\tilde{\bf{s}}},k,{\bf{s}},n,t_u)}}{{\gamma _{i,n,\text{tr}}^{(d,m)}({\bf{\tilde s}},k,{\bf{s}},t_u)}} \!=\! \tilde \eta _{ij}^{(d,\phi ,m)}({\tilde{\bf{s}}},k,{\bf{s}},n),\ \mathop {\lim }\limits_{t_u\rightarrow \infty }\!\frac{{\gamma _{i,n,\text{tr}}^{(d,m)}\!({\bf{\tilde s}},k,{\bf{s}},t_u)}}{{\rho _{i,{\bf{s}}}^{\left( {d,m} \right)}\left( {{\bf{\tilde s}},k,t_u} \right)}}\!=\!F^{(d,m)}_{ij}(k,{\bf s}),}\notag\\
&\Scale[1]{\mathop {\lim }\limits_{{t_u} \to \infty }\! \frac{{\alpha _{i,k}^{\text{pr}}({t_u})}}{t} \!=\! \alpha _{i,k}^{\text{pr}},\ \mathop {\lim }\limits_{{t_u} \to \infty }\! \frac{{\beta _{i,\text{pr}}^{\left( {d,m} \right)}\left( {k,{t_u}} \right)}}{{\alpha _{i,k}^{\text{pr}}\left( {{t_u}} \right)}} \!=\! \beta _{i,\text{pr}}^{\left( {d,m} \right)}\left( k \right),\ \mathop {\lim }\limits_{{t_u} \to \infty }\! \frac{{\gamma _{i,\text{pr}}^{(d,m)}(k,t_u)}}{{\beta _{i,\text{pr}}^{\left( {d,m} \right)}\left( {k,t_0} \right)}} \!=\! F^{(d,m)}_{i,\text{pr}}(k).}\notag
\end{align}
Define $f_{ij}^{(d,m)}(t)\triangleq \left.I_{ij}^{(d,m)}\left( t \right)\right/t$.


Then, it follows from \eqref{eq_time_average_rate_tr} that
\begin{align}
\label{eq_transmission_flow_rate}
&f_{ij}^{\left( {d,m} \right)} \triangleq \mathop {\lim }\limits_{{t_u} \to \infty } f_{ij}^{\left( {d,m} \right)}\left( {{t_u}} \right)\notag\\
&\buildrel (a) \over \le\sum\limits_{{\bf{\tilde s}} \in \mathcal S} {{\pi _{{\bf{\tilde s}}}}\sum\limits_{k = 0}^{K_i^{\text{tr}}} {\tilde \alpha _{i,k}^{\text{tr}}({\bf{\tilde s}})\tilde \beta _{i,\text{tr}}^{\left( {d,m} \right)}\left( {{\bf{\tilde s}},k} \right)\sum\limits_{{\bf{s}} \in S} {{P_{{\bf{\tilde ss}}}} } } } \sum\limits_{n = 1}^{g_{i,{\bf{s}}}^{ - 1}\left( j \right)} {\left[ {{R_{i,{g_{i,n}},k}}\left( {\bf{s}} \right) - {R_{i,{g_{i,n - 1}},k}}\left( {\bf{s}} \right)} \right]\tilde \eta _{ij}^{(d ,m)}({\bf{\tilde s}},k,{\bf{s}},n)}\notag\\
&\buildrel (b) \over =\sum\nolimits_{{\bf{s}} \in S} {{\pi _{\bf{s}}}\sum\nolimits_{k = 0}^{K_i^{\text{tr}}} {\alpha _{i,k}^{\text{tr}}({\bf{s}})\beta _{i,\text{tr}}^{\left( {d,m} \right)}\left( {{\bf{s}},k} \right)}} \sum\nolimits_{n = 1}^{g_{i,{\bf{s}}}^{ - 1}\left( j \right)} {\left[ {{R_{i,{g_{i,n}}}}\left( {\bf{s}} \right) - {R_{i,{g_{i,n - 1}}}}\left( {\bf{s}} \right)} \right]\eta _{ij}^{(d,m)}({\bf{s}},k,n)},
\end{align}
where inequality $(a)$ holds true due to the above converging terms for transmission, the convergence terms in \eqref{eq_stationary_prob}, and the fact that $F^{(d,m)}_{ij}(k,{\bf s})\le {R_{i,{g_{i,n}},k}}\!\left( {\bf{s}} \right)\! -\! {R_{i,{g_{i,n - 1}},k}}\!\left( {\bf{s}} \right)$; equality $(b)$ holds true due to the fact that $\pi_{\bf s} = \sum\nolimits_{\tilde {\bf s}\in \mathcal S}{\pi_{\tilde {\bf s}}P_{{\tilde{\bf s}}{\bf s}}}$ and the following definitions:
\begin{align}
&\alpha _{i,k}^{\text{tr}}({\bf{s}}) \triangleq \sum\nolimits_{{\bf{\tilde s}} \in S} {\frac{{{\pi _{{\bf{\tilde s}}}}{P_{{\bf{\tilde ss}}}}}}{{{\pi _{\bf{s}}}}}\tilde \alpha _{i,k}^{\text{tr}}({\bf{\tilde s}})},\quad \beta _{i,\text{tr}}^{\left( {d,m} \right)}\left( {{\bf{s}},k} \right) \triangleq \sum\nolimits_{{\bf{\tilde s}} \in S} {\frac{{{\pi _{{\bf{\tilde s}}}}{P_{{\bf{\tilde ss}}}}\tilde \alpha _{i,k}^{\text{tr}}({\bf{\tilde s}})}}{{{\pi _{\bf{s}}}\alpha _{i,k}^{\text{tr}}({\bf{s}})}}\tilde \beta _{i,\text{tr}}^{\left( {d,m} \right)}\left( {{\bf{\tilde s}},k} \right)}, \notag\\
&\eta _{ij}^{(d,m)}\!({\bf{s}},k,n)\! \triangleq\! \sum\nolimits_{{\bf{\tilde s}} \in S}\! {\frac{{{\pi _{{\bf{\tilde s}}}}{P_{{\bf{\tilde ss}}}}\tilde \alpha _{i,k}^{\text{tr}}({\bf{\tilde s}})\tilde \beta _{i,\text{tr}}^{\left( {d,m} \right)}\!\left( {{\bf{\tilde s}},k} \right)}}{{{\pi _{\bf{s}}}\alpha _{i,k}^{\text{tr}}({\bf{s}})\beta _{i,\text{tr}}^{\left( {d,m} \right)}\!\left( {{\bf{s}},k} \right)}}} \tilde \eta _{ij}^{(d,m)}\!({\bf{\tilde s}},k,{\bf{s}},n).\notag
\end{align}
In addition, define $f_{i,\text{pr}}^{(d,m)}(t)\triangleq \left.I_{i,\text{pr}}^{(d,m)}\left( t \right)\right/t$. With the converging terms for processing and the fact that $F^{(d,m)}_{i,\text{pr}}(k)\le \left.R_{i,k}\right/r^{(m+1)}$, for $m<M$, it follows from \eqref{eq_time_average_rate_pr} that
\begin{align}
f_{i,\text{pr}}^{\left( {d,m} \right)} & \triangleq \mathop {\lim }\limits_{{t_u} \to \infty } f_{i,\text{pr}}^{\left( {d,m} \right)}\left( {{t_u}} \right)\le {\sum\nolimits_{k = 0}^{K_i^{\text{pr}}} {\alpha _{i,k}^{\text{pr}}\beta _{i,tr}^{\left( {d,m} \right)}\left( k \right)\frac{{R_{i,k}}}{{{r^{\left( {m + 1} \right)}}}} } } .
\end{align}
Moreover, the flow efficiency and non-negativity constraints follow:
\begin{align}
&f_{i,\text{pr}}^{\left( {d,{M }} \right)}= 0,\ f_{\text{pr},i}^{\left( {d,0} \right)} = 0,\ f_{dj}^{\left( {d,{M }} \right)} = 0,\ f_{i,\text{pr}}^{\left( {d,m} \right)} \geq 0,\ f_{ij}^{\left( {d,m} \right)} \geq 0.
\end{align}
Furthermore, dividing by $t_u$ on both sides of \eqref{eq_packets_flow_balance} and \eqref{packet_flow_chaining}, and letting ${t_u}\rightarrow \infty$, we have, for $i\neq d$ or $m<M$, with the result of \eqref{eq_rate_stable_a_X},
\begin{equation}
\label{eq_flow_conservation}
\sum\nolimits_{j } \!{f_{ji}^{\left( {d,m} \right)}} \!\! + f_{\text{pr},i}^{\left( {d,m} \right)} \!\! + \lambda_i^{(d,m)}  =  \sum\nolimits_{j} \! {f_{ij}^{\left( {d,m} \right)}}  \!\! + f_{i,\text{pr}}^{\left( {d,m} \right)},
\end{equation}
and, for $m<M$ and all $i$ and $d$, 
\begin{equation}
\label{eq_flow_chaining}
f_{\text{pr},i}^{(m+1)} = \xi^{(m+1)}f_{i,\text{pr}}^{(m)}.
\end{equation}


Finally, the time average cost satisfies
\begin{align}
&\underline h 
= \mathop {\lim }\limits_{{t_u} \to \infty } \sum\nolimits_{i \in N} {\left[ {\sum\nolimits_{k = 0}^{K_i^{\text{pr}}} {\frac{{\alpha _{i,k}^{\text{pr}}({t_u})}}{{{t_u}}}} w_{i,k}^{\text{pr}} + } \right.} \left. {\sum\nolimits_{{\bf{\tilde s}} \in S} {\frac{{{T_{{\bf{\tilde s}}}}\left( {{t_u}} \right)}}{{{t_u}}}\sum\nolimits_{k = 0}^{K_i^{\text{tr}}} {\frac{{\tilde \alpha _{i,k}^{\text{tr}}({\tilde {\bf{ s}}},{t_u})}}{{{T_{{\tilde {\bf{ s}}}}}\left( {{t_u}} \right)}}w_{i,k}^{\text{tr}}} } } \right]\nonumber\\
&\buildrel (a) \over = \sum\nolimits_{i \in N}\! {\left( {\sum\nolimits_{k = 0}^{K_i^{\text{pr}}} {\alpha _{i,k}^{\text{pr}}} w_{i,k}^{\text{pr}} + \sum\nolimits_{k = 0}^{K_i^{\text{tr}}} {w_{i,k}^{\text{tr}}\sum\nolimits_{{\bf{s}} \in S} {{\pi _{\bf{s}}}\alpha _{i,k}^{\text{tr}}\left( {\bf{s}} \right)} }  } \right)},
\end{align}
where in $(a)$ we used the fact that $\sum\nolimits_{{\tilde {\bf s}}\in \mathcal S} \pi_{\tilde {\bf{ s}}}\tilde \alpha^{\text{tr}}_{i,k}(\tilde {\bf { s}})=\sum\nolimits_{ {\bf s}\in \mathcal S} \pi_{\bf{ s}}\alpha^{\text{tr}}_{i,k}({\bf { s}})$.

In summary, given $\{\lambda_i^{(d,m)}\}\in \Lambda$, this proves that there exists a set of flow variables and probability values that satisfy the constraints in Theorem \ref{thm: network_capacity_region}.
The minimum average cost $\overline h^*$ follows from taking the minimum of $\underline h$ over all the variable sets that stabilize the network.

\subsection{Proof of Sufficiency}
Given exogenous input rate matrix $\{\lambda_i^{(d,m)}  + \epsilon\}$, $\epsilon>0$, probability values $ \alpha_{i,k}^{\text{tr}}( {\bf s})$, $ \beta_{i,\text{tr}}^{(d,m)}( {\bf s},k)$, $\eta_{ij}^{(d,m)}({\bf s},k,n)$, $\alpha_{i,k}^{\text{pr}}$, $\beta_{i,\text{pr}}^{(d,m)}(k)$, and
multi-commodity flow variables $f_{ij}^{(d,m)}$,  $f_{i,\text{pr}}^{(d,m)}$, $f_{\text{pr},i}^{(d,m)}$ satisfying
(\ref{eq_thm1_stability})-(\ref{eq:underline}), we
construct a stationary randomized policy using single-copy routing such that:
\begin{align}
&\E \left\{ \mu_{ij}^{\left( {d ,m} \right)} (t)\right\} =  f_{ij}^{(d,m)},\ \E \left\{ \mu_{i,\text{pr}}^{\left( {d ,m} \right)} (t)\right\} =f_{i,\text{pr}}^{(d,m)},\ \E \left\{ \mu_{\text{pr},i}^{\left( {d ,m} \right)} (t)\right\} =f_{\text{pr},i}^{(d,m)}\label{eq2},
\end{align}
where
$\mu_{ij}^{\left( {d,m} \right)} (t)$, $\mu_{i,\text{pr}}^{(d,m)}(t)$, and $\mu_{\text{pr},i}^{(d,m)}(t)$ respectively
denote the flow rates assigned by the stationary randomized policy for transmission and processing. 
Plugging $\{\lambda_i^{(d,m)}  + \epsilon\}$ and the terms in \eqref{eq2} into (\ref{eq_thm1_stability}), after algebraic manipulations, we have
\begin{equation}
\label{suf1}
 \E \left \{\sum\nolimits_{j } \mu_{ij}^{\left( {d ,m} \right)} (t) +   \mu_{i,\text{pr}}^{\left( {d ,m} \right)} (t)  - \right. \left.  \sum\nolimits_{j } \mu_{ji}^{\left( {d ,m} \right)} (t)- \mu_{\text{pr},i}^{\left( {d ,m} \right)} (t)  \right\}
\geq \lambda_i^{(d,m)} \!+ \epsilon .
\end{equation}
By applying standard LDP analysis \cite{Neely_book2}, strong network stability  (\ie $\{\lambda_i^{(d,\phi,m)} \}$ in the interior of the capacity region)
follows.

\section{Proof of Theorem \ref{thm: network_capacity_region2}}
\label{appendix: proof of theorem 2}

Let the \emph{Lyapunov drift}  \cite{Neely_book2} for the queue backlogs of the network be defined as
\begin{equation}
\label{eq_Lypunov_drift}
\Delta({{\cal H}}(t))\triangleq\frac{1}{2} \! \sum\nolimits_{i,(d ,m)} \! {\mathbb{E} \! \left[ \! {\left. {{{\left(\! {Q_i^{\left( {d ,m} \right)} \!\left( {t + 1} \right)} \!\right)}^2} \!\!-\! {{\left( \!{Q_i^{\left( {d,m} \right)}\!\left( t \right)} \! \right)}^2}} \right|{\cal H}\left( t \right)} \right]}.\notag
\end{equation}
After standard LDP algebraic manipulations on \eqref{eq_queueing_dynamic} (see Ref. \cite{Neely_book2}), we have
\begin{align}
\label{eq_Lypunov_drift_bound1}
&\!\!{\!\!\!\Delta ({\cal {H}}(t))\! +\! V\mathbb{E}\{\left.h(t)\right|{\cal H}(t)\}\! \le\! NB\! + \!\! \sum\nolimits_{i,(d,m)}\!  {\lambda _i^{\left( {d,m} \right)}Q_i^{\left( {d,m} \right)}\! \left( t \right) \!}} \notag\\
&\!\!\!{- \sum\nolimits_{i } {\mathbb{E}\left\{ {\!\left. {Z_i^{\text{pr}} \! \left( t \right)\! - \!Vh_i^{\text{pr}}(t) + Z_i^{\text{tr}} \!\left( t \right) \! - \!Vh_i^{\text{tr}}(t) } \right|{\cal H}\left( t \right)} \right\}},}
\end{align}
where, with $r_{\min} \triangleq \min_{m}\{r^{(m)}\}$ and $\xi_{\max} \triangleq \max_{m}\{\xi^{(m)}\}$, we define 
\begin{align}
&{B \triangleq \frac{1}{2}\max\nolimits_{i}\left\{{\left( {\mathop {\max }\nolimits_{j,{\bf s}:j \ne i,{\bf{s}} \in S} \left\{ {{R_{ij,K_i^{\text{tr}}}}\left( {\bf{s}} \right)} \right\} + \left.{R_{i,K_i^{\text{pr}}}}\right/{r_{\min }}} \right)^2}\right.} \nonumber\\
&\qquad{\left.+ {\left( {\mathop {\max }\nolimits_{{\bf{s}} \in S} \left\{ {\sum\nolimits_{j:j \ne i} {{R_{ji,K_j^{\text{tr}}}}\left( {\bf{s}} \right)} } \right\} + \left.{\xi _{\max }}{R_{i,K_i^{\text{pr}}}}\right/{r_{\min }} + {A_{\max }}} \right)^2}\right\}\!,}\nonumber
\end{align}
\begin{align}
&  {Z^{\text{pr}} _{i}\!(t)   \triangleq \displaystyle \sum\nolimits_{(d ,m)} \! \mu_{i,\text{pr}}^{(d ,m)}(t) \! \left[  Q_i^{(d ,m)}(t) -  \xi^{(m+1)} Q_i^{(d ,m+1)}(t)\right ]}, \notag\\
& { Z^{\text{tr}} _{i}(t)  \triangleq \displaystyle \sum\nolimits_{u=1}^{N-1}{\sum\nolimits_{(d ,m)}  \! \mu_{iq_{i,u}}^{(d ,m)}(t) \! \left[  Q_i^{(d ,m)}(t) -   Q_{q_{i,u}}^{(d ,m)}(t) \right ]  }},   \label{eq_Z_tr_def}\\
&h_i^{\text{pr}}(t) \triangleq \!\sum\nolimits_{k =0}^{K_i^{\text{pr}}} {w_{i,k}^{\text{pr}}y_{i,k}^{\text{pr}}\left( \tau  \right)},\ h_i^{\text{tr}}(t) \triangleq \sum\nolimits_{k =0}^{K_i^{\text{tr}}} {w_{i,k}^{\text{tr}}y_{i,k}^{\text{tr}}\left( \tau  \right)}. \notag
\end{align}

\begin{lem}
\label{lem: maximize metric}
Among the algorithms using single-copy routing, the 
DWCNC algorithm, in each timeslot $t$, 
maximizes $\mathbb{E}\{\left.Z_i^{\text{\emph{tr}}}\left( t \right)\! -\! Vh_i^{\text{\emph{tr}}}(t)\right|{\cal H}(t)\}$ subject to (\ref{eq_rate_vs_group_rate2})-(\ref{eq_group_rate}) and $\mathbb{E}\{\left.Z_i^{\text{\emph{pr}}}\left( t \right)\! -\! Vh_i^{\text{\emph{pr}}}(t)\right|{\cal H}(t)\}$ subject to (\ref{chain})-(\ref{ratepr}).\hfill $\square$

\end{lem}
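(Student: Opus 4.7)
The plan is to exploit the fact that the constraints split into two independent groups: (\ref{chain})--(\ref{ratepr}) only involve processing variables $\{\mu_{i,\text{pr}}^{(d,m)}(t), \mu_{\text{pr},i}^{(d,m)}(t), y_{i,k}^{\text{pr}}(t)\}$, while (\ref{eq_rate_vs_group_rate2})--(\ref{eq_group_rate}) only involve transmission variables $\{\mu_{iq_{i,u},n}^{(d,m)}(t), y_{i,k}^{\text{tr}}(t)\}$. Consequently, $\mathbb{E}\{Z_i^{\text{pr}}(t) - V h_i^{\text{pr}}(t) \mid \mathcal{H}(t)\}$ and $\mathbb{E}\{Z_i^{\text{tr}}(t) - V h_i^{\text{tr}}(t) \mid \mathcal{H}(t)\}$ can be maximized separately. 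For each of the two subproblems I would show that the DWCNC decision rule is in fact a pointwise maximizer (per realization), from which maximization of the conditional expectation follows immediately.

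For the processing term, $Z_i^{\text{pr}}(t)$ is linear in $\{\mu_{i,\text{pr}}^{(d,m)}(t)\}$ with coefficient $Q_i^{(d,m)}(t) - \xi^{(m+1)} Q_i^{(d,m+1)}(t)$. Non-negativity of $\mu_{i,\text{pr}}^{(d,m)}(t)$ implies that only commodities with strictly positive coefficient should be assigned flow, and the rate constraint (\ref{ratepr}) with $\sum_k y_{i,k}^{\text{pr}}(t)\leq 1$ then shows that, once $k$ is fixed, concentrating the whole processing budget $R_{i,k}$ on the commodity that maximizes $[Q_i^{(d,m)}(t) - \xi^{(m+1)} Q_i^{(d,m+1)}(t)]^+ / r^{(m+1)} = W_i^{(d,m)}(t)$ is optimal. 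Substituting this choice back, the remaining maximization over $k$ reduces to $\max_k \{R_{i,k} W_i^{(d,m)^\dag}(t) - V w_{i,k}^{\text{pr}}\}$, which is precisely (\ref{eq_processing_metric}). All quantities involved are $\mathcal{H}(t)$-measurable, so the pointwise maximization coincides with the conditional-expectation maximization.

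For the transmission term, I would first apply (\ref{eq_rate_vs_group_rate2}) and swap the order of summation to rewrite
\begin{equation}
Z_i^{\text{tr}}(t) = \sum_{n=1}^{N-1}\sum_{(d,m)}\sum_{u=n}^{N-1}\mu_{iq_{i,u},n}^{(d,m)}(t)\bigl[Q_i^{(d,m)}(t)-Q_{q_{i,u}}^{(d,m)}(t)\bigr]. \notag
\end{equation}
Under single-copy routing, each bit in the $n$-th partition is retained by at most one receiver in $\Omega_{i,n}({\bf S}(t))$, so for each fixed commodity $(d,m)$ the per-partition contribution is maximized by assigning that partition to the $j\in\Omega_{i,n}({\bf S}(t))$ achieving $\max_{j\in\Omega_{i,n}({\bf S}(t))} W_{ij}^{(d,m)}(t)$ (and discarded when this max is zero, as $W_{ij}^{(d,m)}(t)=[Q_i^{(d,m)}-Q_j^{(d,m)}]^+$). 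The rate constraint (\ref{eq_group_rate}) then implies that the full per-partition rate $R_{iq_{i,n},k}({\bf S}(t)) - R_{iq_{i,n-1},k}({\bf S}(t))$ should be devoted to the single commodity that maximizes the resulting expression. Since $(k,(d,m))$ must be chosen based only on $\mathcal{H}(t)=\{{\bf Q}(t),{\bf S}(t-1)\}$ (the instantaneous CSI is learned only after the transmission), I would take the conditional expectation over ${\bf S}(t)$ using the Markov transition $\Pr({\bf S}(t)={\bf s}\mid{\bf S}(t-1))$, which produces exactly $W_{i,k,\text{tr}}^{(d,m)}(t)$ as defined in (\ref{Wp}). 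Joint maximization of $W_{i,k,\text{tr}}^{(d,m)}(t)-Vw_{i,k}^{\text{tr}}$ over $(k,(d,m))$ then recovers (\ref{eq_transmission_metric}), while Step~4 of the local transmission rule implements the per-realization forwarding-responsibility assignment described above.

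The main obstacle will be keeping the information structure straight in the transmission case: resource and commodity selection are causally constrained to depend on $\mathcal{H}(t)$ alone, whereas the forwarding/responsibility assignment may use the post-transmission CSI feedback. The argument must show that any single-copy policy respecting this causality is upper-bounded by the DWCNC metric, and that DWCNC attains this bound for every realization of ${\bf S}(t)$, so the bound is preserved under conditional expectation. The processing decomposition is routine once this causal framing is in place.
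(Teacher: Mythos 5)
Your proposal is correct and follows essentially the same route as the paper's proof: it separates the processing and transmission subproblems, handles processing by a pointwise linear argument over $\mathcal{H}(t)$-measurable quantities, and for transmission rewrites $Z_i^{\text{tr}}(t)$ per partition, bounds any single-copy policy by assigning each partition to the receiver with the largest $W_{ij}^{(d,m)}(t)$, devotes the full rate to a single commodity, and takes the conditional expectation over ${\bf S}(t)$ to arrive at $W_{i,k,\text{tr}}^{(d,m)}(t)$ and the joint maximization in (\ref{eq_transmission_metric}). The paper merely formalizes the step you describe verbally by introducing fractional time-sharing and forwarding variables $\chi_{i,\text{tr}}^{(d,m)}(t)$, $\eta_{ij,n}^{(d,m)}(t)$ and chaining the corresponding inequalities, and you correctly identify the key causality point (commodity/resource selection from $\mathcal{H}(t)$ only, forwarding assignment after CSI feedback) that the paper handles via the conditional-independence factorization.
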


\begin{proof}
See Appendix \ref{appendix: proof}.
\end{proof}

Lemma \ref{lem: maximize metric} implies that the right hand side of \eqref{eq_Lypunov_drift_bound1} under DWCNC is no larger than
the corresponding expression under the optimal stationary randomized policy (characterized in Theorem \ref{thm: network_capacity_region}) that supports $({\bm \lambda}+\epsilon {\bf 1})\in \Lambda$ and
achieves average cost $\overline h^*({\bm \lambda}+\epsilon {\bf 1})$:
\begin{align}
\label{eq_drift_plus_penalty}
&{\Delta ({\cal{H}}(t)) \!+\! V\mathbb{E}\{\left.h(t)\right|{\cal H}(t)\} \le NB +\!\! \sum\nolimits_{i,(d ,m)} \! {\lambda _i^{\left( {d ,m} \right)}Q_i^{\left( {d ,m} \right)}\! \left( t \right)\!} }\notag\\
&{\quad+\sum\nolimits_{i } {\mathbb{E}\left[ {\left. {\!Z_i^{*\text{pr}}\!\left( t \right)\! - \!Vh_i^{*\text{pr}}(t)\! + Z_i^{*\text{tr}}\!\left( t \right)\! - \!Vh_i^{*\text{tr}}(t) } \right|{\cal H}\!\left( t \right)} \right]}\! }\notag\\
& { \le NB+V{\overline h^*}\!\!\left( { {\bm \lambda}\! +\! \epsilon {\bf{1}}} \right)\!-\! \epsilon \sum\nolimits_i \! {\sum\nolimits_{\left( {d ,m} \right)} \!{Q_i^{\left( {d ,m} \right)}\!\left( t \right).} }}
\end{align}


Finally, we can use the theoretical result in \cite{Neely_prob_1} for the proof of network stability and average cost convergence with probability $1$. Note that the following bounding conditions are satisfied
in the network system:
\begin{enumerate}
\item The second moment of $\mathbb{E}\{(h(t))^2\}$ is upper bounded by $ \sum\nolimits_{i}{(w_{i,K_i^{\text{tr}}}^{\text{tr}}+w_{i,K_i^{\text{pr}}}^{\text{pr}})}$ and therefore satisfies $\sum\nolimits_{\tau=0}^{\infty}{\left.\mathbb{E}\{(h(t))^2\}\right/\tau^2}<\infty$.
\item We have $\mathbb{E}\{\left.h(t)\right|\mathcal H(t)\}$ lower bounded for all $\mathcal H (t)$ and $t$: $\mathbb{E}\{\left.h(t)\right|\mathcal H(t)\}\ge 0$.
\item The conditional fourth moment of queue length change is upper bounded for all $\mathcal H (t)$, $t$, $i$ and $(d,m)$:
    \begin{align}
    &\Scale[0.97]{\mathbb{E}\!\left\{\!\!\left.\left(Q_i^{(d,m)}(t+1)-Q_i^{(d,m)}(t)\right)^4\right|\!\mathcal H(t)\!\right\}\!\le\!\max\limits_i\!\left\{\!\!\left[ {\mathop {\max }\limits_{{\bf{s}} \in S} \left\{ {\sum\limits_{j:j \ne i} {{R_{ji,K_j^{\text{tr}}}}\left( {\bf{s}} \right)} } \!\right\} \!+\! \frac{{\xi _{\max }}{R_{i,K_i^{\text{pr}}}}}{r_{\min }} \!+\! {A_{\max }}} \right]^4\!\right\}\!\!.\notag}
    \end{align}
\end{enumerate}
With the above three conditions satisfied, based on the derivations in \cite{Neely_prob_1}, Eq. \eqref{eq_drift_plus_penalty} leads to the network stability \eqref{que} and average cost \eqref{eq_ave_cost} convergence with probability $1$ of DWCNC.


\section{Proof of Lemma \ref{lem: maximize metric}}
\label{appendix: proof}

Regarding the processing decisions, since the {\em computing channel} is always known, maximizing $\mathbb{E}\{\left.Z_i^{\text{pr}}\left( t \right) - Vh_i^{\text{pr}}(t)\right|{\cal H}(t)\}$  is equivalent to maximizing $Z_i^{\text{pr}}\left( t \right) - Vh_i^{\text{pr}}(t)$. And the maximization of $Z_i^{\text{pr}}\left( t \right) - Vh_i^{\text{pr}}(t)$ subject to (\ref{chain})-(\ref{ratepr}) can be directly achieved by the choice of commodity $(d,m)$, resource allocation $k$, and flow rate $\mu_{i,\text{pr}}^{(d,m)}(t)$ described by the local processing decisions of DWCNC in Sec. \ref{subsec: alg_description}.

With respect to the transmission decisions, 
it follows by plugging \eqref{eq_rate_vs_group_rate2} into \eqref{eq_Z_tr_def} that
\begin{equation}
\label{eq_Z_tr2}
{Z_i^{\text{tr}}\left( t \right) =\sum\nolimits_{\left( {d,m} \right)} \! {\sum\nolimits_{n = 1}^{N - 1} {\! { {\sum\nolimits_{u = n}^{N - 1} \! {\mu _{i{q_{i,u}},n}^{\left( {d,m} \right)}\!\left( t \right)} \!\left[ {Q_i^{\left( {d,m} \right)}\!\left( t \right) - Q_{{q_{i,u}}}^{\left( {d,m} \right)}\!\left( t \right)} \right]} } } } }.
\end{equation}
Let $\chi_{i,\text{tr}}^{(d,m)}(t)$ be the fraction of the transmission time allocated to the transmission of commodity $(d,m)$ in timeslot $t$,
and let $\eta_{ij,n}^{(d,m)}(t)$ be the fraction of the transmitted commodity $(d,m)$ in the $n$-th partition that is retained by node $j$, with $n\le q_{i,{\bf S}(t)}^{-1}(j)$. Then, assuming single-copy routing, it follows from \eqref{eq_group_rate} that
\begin{align}
\label{eq_mu_group_decompose}
&{\mu _{i{q_{i,u}},n}^{\left( {d,m} \right)}\!\left( t \right) = \chi _{i,\text{tr}}^{(d,m)}\!(t)\eta _{i{q_{i,u}},n}^{(d,m)}\!(t)\! \left[ {{R_{i{q_{i,n}},k}}\!\left( {{\bf{S}}\!\left( t \right)} \right) - {R_{i{q_{i,n - 1}},k}}\!\left( {{\bf{S}}\!\left( t \right)} \right)} \right]},\ \ \forall i,\ t,\\ \label{eq_beta}
&\sum\nolimits_{\left( {d ,m} \right)} {\chi _{i,\text{tr}}^{(d ,m)}(t)}  \le 1,\ \ \forall i,\ t, \\
\label{eq_eta}
&\sum\nolimits_j {\eta _{ij,n}^{(d ,m)}(t)}  \le 1,\ \ \forall i,\ t,\ (d,m).
\end{align}
Plugging \eqref{eq_mu_group_decompose} into \eqref{eq_Z_tr2} and taking the expectation conditioned on ${\cal H}(t)$ and $\{y_{i,k}^{\text{tr}}(t)=1\}$, it follows that
\begin{align}
\label{eq_upper_bound1}
&\Scale[0.94]{{\mathbb{E}\left\{\left.Z_i^{\text{tr}}\left( t \right)\right|{\cal H}(t),y_{i,k}^{\text{tr}}(t)=1\right\}}}\notag\\
&\Scale[0.94]{ \buildrel (a) \over \le\! \sum\limits_{\left( {d ,m} \right)} \sum\limits_{n = 1}^{N - 1} {\mathbb{E}\left\{ {\left.\chi _{i,\text{tr}}^{(d ,m)}(t){\left[ {{R_{i{q_{i,n}},k}}\left( {{\bf{S}}\left( t \right)} \right) - {R_{i{q_{i,n - 1}},k}}\left( {{\bf{S}}\left( t \right)} \right)} \right]}\sum\limits_{u = n}^{N - 1} {\eta _{i{q_{i,u}},n}^{(d ,m)}(t)W_{i{g_{i,u}}}^{\left( {d ,m} \right)}\left( t \right)} \right|{\cal H}\left( t \right),y_{i,k}^{\text{tr}}(t)=1} \right\}}}\notag\\
& \Scale[0.94]{\buildrel (b) \over \le \sum\limits_{\left( {d ,m} \right)} {\sum\limits_{n = 1}^{N - 1} {\mathbb{E}\left\{ {\left.\chi _{i,\text{tr}}^{(d ,m)}(t){\left[ {{R_{i{q_{i,n}},k}}\left( {{\bf{S}}\left( t \right)} \right) - {R_{i{q_{i,n - 1}},k}}\left( {{\bf{S}}\left( t \right)} \right)} \right]}\mathop {\max }\limits_{j\in \Omega_{i,n}({\bf S}(t))}\!\! \left\{ {W_{ij}^{\left( {d ,m} \right)}\left( t \right)} \right\} \right|{\cal H}\left( t \right),y_{i,k}^{\text{tr}}(t)=1}\! \right\}}}}\notag\\
&\Scale[0.94]{\!\buildrel (c)\over = \sum\nolimits_{\left( {d,m} \right)} {\mathbb{E}\left\{ {\left. {\chi _{i,\text{tr}}^{(d,m)}(t)} \right|{\cal H}\left( t \right)}, y_{i,k}^{\text{tr}}(t)=1 \right\}}{\sum\nolimits_{n = 1}^{N - 1} {\mathbb{E}\left\{ {{\max \nolimits_{j \in {\Omega _{i,n}}({\bf{S}}(t))}}\!\!\left\{ {W_{ij}^{\left( {d,m} \right)}\left( t \right)} \right\}} \right.} }} \notag \\
&\Scale[0.94]{\qquad\qquad\times \left. {\left. {\left[ {{R_{i{q_{i,n}},k}}\left( {{\bf{S}}\left( t \right)} \right) - {R_{i{q_{i,n - 1}},k}}\left( {{\bf{S}}\left( t \right)} \right)} \right]} \right|{\cal H}\left( t \right),y_{i,k}^{\text{tr}}(t)=1} \right\}}\notag\\
&\Scale[0.94]{\buildrel (d) \over \le \mathop {\max }\limits_{\left( {d ,m} \right)} \left\{ {\sum\limits_{n = 1}^{N - 1} \mathbb{E}\left\{{\left. {\left[ {{R_{i{q_{i,n}},k}}\left( {{\bf{S}}\left( t \right)} \right) - {R_{i{q_{i,n - 1}},k}}\left( {{\bf{S}}\left( t \right)} \right)} \right]} {\mathop {\max }\limits_{j \in {\Omega _{i,n}}({\bf{S}}(t))} \left\{ {W_{ij}^{\left( {d ,m} \right)}\left( t \right)} \right\}} \right|{\cal H}\left( t \right),y_{i,k}^{\text{tr}}(t)=1} \right\} }\right\}}\notag\\
&\Scale[0.94]{ \buildrel (e) \over = \mathop {\max }\limits_{\left( {d,m} \right)} \left\{ {W_{i,k,\text{tr}}^{\left( {d,m} \right)}\left( t \right)}\right\}}.
\end{align}
In \eqref{eq_upper_bound1}, inequality $(a)$ follows from the definition of $W_{ij}^{(d,\phi,m)}(t)$; 
inequality $(b)$ follows from \eqref{eq_eta}; 
equality $(c)$ holds because, given ${\cal H}(t)$ and $\{y_{i,k}^{\text{tr}}(t)=1\}$, the values of $ {R_{i{q_{i,n}},k}}\left( {{\bf{S}}\left( t \right)} \right)$ and $\mathop {\max }\nolimits_{j\in \Omega_{i,n}({\bf S}(t))} \{ {W_{ij}^{\left( {d,m} \right)}\left( t \right)} \}$ are determined by ${\bf S}(t)$ and therefore are independent from ${\chi _{i,\text{tr}}^{(d ,m)}(t)}$; inequality $(d)$ follows from \eqref{eq_beta}; equality $(e)$ follows from the definition of $W_{i,k,\text{tr}}^{\left( {d,m} \right)}\left( t \right)$ in \eqref{Wp}.

Finally, taking expectation over $y_{i,k}^{\text{tr}}(t)$ on \eqref{eq_upper_bound1}, we further have
\begin{align}
\label{eq_upper_bound3}
&\Scale[1]{\mathbb{E}\left\{ {\left. {Z_i^{\text{tr}}\left( t \right) - Vh_i^{\text{tr}}\left( t \right)} \right|{\cal H}(t)} \right\}}\Scale[1]{\ \le \sum\nolimits_{k = 0}^{K_i^{\text{tr}}} {\left[ {{{\max }_{\left( {d,m} \right)}}\left\{ {W_{i,k,\text{tr}}^{\left( {d,m} \right)}\left( t \right)} \right\} - Vw_{i,k}^{\text{tr}}} \right]\Pr \left\{ {y_{i,k}^{\text{tr}}(t) = 1} \right\}}}\notag\\
&\Scale[1]{\ \buildrel (f) \over \le {\max \nolimits_{k,\left( {d,m} \right)}}\left\{ {W_{i,k,\text{tr}}^{\left( {d,m} \right)}\left( t \right)}  - Vw_{i,k}^{\text{tr}} \right\}},
\end{align}
where $(f)$ follows due to the fact that $\sum\nolimits_{k = 0}^{K_i^{\text{tr}}} {\Pr \left\{ {y_{i,k}^{\text{tr}}(t) = 1} \right\}}=1$.

In \eqref{eq_upper_bound1} and \eqref{eq_upper_bound3}, the upper bounds $(a)$ and $(b)$ can be achieved by implementing step \ref{step: forwarding_decision} of the local transmission decisions of DWCNC;
the upper bound $(d)$, $(e)$, and $(f)$ can be achieved by implementing step \ref{step: compute_trans_utility_weight} and \ref{step: choose_commodity_resource} of the local transmission decisions of DWCNC.
This concludes the proof of Lemma \ref{lem: maximize metric}.


\ifCLASSOPTIONcaptionsoff
  \newpage
\fi

\end{document}